\newcommand{\bea}{\begin{eqnarray}}
\newcommand{\eea}{\end{eqnarray}}
\newcommand{\be}{\begin{equation}}
\newcommand{\ee}{\end{equation}}
\newcommand{\beann}{\begin{eqnarray*}}
\newcommand{\eeann}{\end{eqnarray*}}
\newcommand{\balnn}{\begin{align*}}
\newcommand{\ealnn}{\end{align*}}
\newcommand{\nn}{\nonumber}
\newcommand{\R}{{\mathbb R}}
\newcommand{\ba}{\begin{array}}
\newcommand{\ea}{\end{array}}
\newcommand{\bd}{\begin{displaymath}}
\newcommand{\ed}{\end{displaymath}}
\newcommand{\1}{{\mathbf 1}}
\def\E{{\mathbb E}}
\newcommand{\M}{{\mathcal M}}
\newcommand{\BB}{{\mathcal B}}
\newtheorem{thm}{Theorem}
\newtheorem{lemma}{Lemma}
\newtheorem{prop}{Proposition}
\newtheorem{definition}{Definition}
\begin{document}
\newcommand{\edt}[1]{{\vbox{ \hbox{#1} \vskip-0.3em \hrule}}}

\begin{center}
{\LARGE\bf Summary statistics for inhomogeneous marked point processes}\\[0.2in]

{\large O.\ Cronie and M.N.M.\ van Lieshout}\\[0.1in]

{\em CWI, 
P.O.\ Box 94079, 
NL-1090 GB Amsterdam, 
The Netherlands }
\end{center}

\noindent{\bf Abstract}: 
We propose new summary statistics for intensity-reweighted moment stationary 
marked point processes with particular emphasis on discrete marks. The new 
statistics are based on the $n$-point correlation functions and reduce to 
cross $J$- and $D$--functions when stationarity holds. We explore the
relationships between the various functions and discuss their explicit forms 
under specific model assumptions. We derive ratio-unbiased minus sampling 
estimators for our statistics and illustrate their use on a data set of 
wildfires.

\noindent {\bf Key words}: 
Generating functional,
Intensity-reweighted moment stationarity, 
$J$-function, 
Marked point process, 
Multivariate point process,
Nearest neighbour distance distance distribution function,
$n$-point correlation function, 
Reduced Palm measure.

\noindent
{\em Mathematics Subject Classification:}
60G55, 60D05.


\section{Introduction}
\label{S:intro}

The analysis of a marked point pattern typically begins with computing 
some summary statistics which may be used to find specific structures 
in the data and suggest suitable models
\cite{SKM,DVJ1,DVJ2,Handbook,Illian,MCbook}. The choice of summary
characteristic depends both on the pattern at hand and on the feature or 
hypothesis of interest. Indeed, under the working assumption of 
stationarity, for discrete marks, cross versions of the $K$- or
nearest neighbour distance distribution function may be appropriate 
\cite{Digglebook}; for real-valued marks, the mark correlation functions 
of \cite{Penttinen} are widely used. Various types of $J$-functions 
\cite{MCJfunMPP,MCBaddeley2} offer useful alternatives.

Often, however, the assumption of homogeneity cannot be justified. In the 
unmarked case, \cite{BaddeleyEtAl} proposed an inhomogeneous extension of 
the $K$-function for so-called second order intensity-reweighted stationary 
point processes. Their ideas were extended to spatio-temporal point processes
in \cite{GabrielDiggle,MollerGhorbani}, whereas \cite{CronLies13,MCJfun} 
extended the $J$-function under the somewhat stronger assumption 
of intensity-reweighted moment stationarity in space and time.

For non-stationary multivariate point processes, 
\cite{MollerBook} proposed an extension of the $K$-function under the 
assumption of second order intensity-reweighted stationarity.
As we will indicate in this paper, this structure may be extended to 
$K$-functions for general marked point processes.

Regarding $J$-functions, 
in \cite{MCJfun} the author noted that the ideas in that paper could be 
combined with those in \cite{MCJfunMPP} to define inhomogenous $J$-functions 
with respect to mark sets. In this paper we do so, and, as a by-product,
obtain a generalisation of the cross nearest-neighbour distance
distribution function.

The paper is structured as follows. 
In Section~\ref{Preliminaries}, we define marked point processes with 
locations in Euclidean spaces and give the necessary preliminaries. 
In Sections~\ref{SectionFGfun} and \ref{SectionJfun}, 
we define, respectively, cross $D$- and $J$-functions for 
inhomogeneous multivariate point processes and propose generalisations 
to point processes with real-valued marks. We show that $D$ and $J$ can
be expressed in terms of the generating functional and discuss the
relationships between these statistics and the cross $K$-function. 
In Section~\ref{SectionIndependence}, we investigate the form of our 
statistics under various independence and marking assumptions.
We derive minus sampling estimators in Section~\ref{SectionEstimation}, 
which are applied to a data set on wildfires in New Brunswick, Canada, 
in Section~\ref{SectionApplications}. We finish the paper with a summary.

\section{Definitions and notations}
\label{Preliminaries}

Throughout this paper, we consider marked point processes $Y$
\cite[Definition~6.4.1]{DVJ1} with points in $\R^d$ equipped with the 
Euclidean metric and Borel $\sigma$-algebra $\BB(\R^d)$. We write
$\ell$ for the Lebesgue measure on $\BB(\R^d)$. By definition, 
the ground process $Z$ obtained from $Y$ by ignoring the marks is a 
well-defined point process on $\R^d$ in its own right. We shall 
assume that $Z$ is simple, that is, almost surely does not contain
multiple points. 

We assume that the mark space $\M$ is Polish and equipped with a finite 
reference measure $\nu$ on the Borel $\sigma$-algebra $\BB(\M)$. We denote 
by $\BB(\R^{d}\times\M)$ the Borel $\sigma$-algebra on the product space 
$\R^{d}\times\M$. In the special case that $\M$ is finite, $Y$ can be seen
as a multivariate point process $(Y_1, \dots, Y_k)$ where $Y_i$ contains
the points marked $i \in \M = \{ 1, \dots, k \}$. 

\subsection{Product densities}
\label{S:product}

Recall that the intensity measure of a marked point process 
is defined on product sets $A = B\times C \in \BB(\R^{d}\times\M)$
by 
\[
\Lambda(A) = \E Y(A) = \E Y(B\times C),
\]
the expected number of points in $B$ with marks in $C$. If $\Lambda$ is 
locally finite as a set-function, it can be extended to a measure on 
$\BB(\R^{d}\times\M)$ (see e.g.\ \cite[Theorem A, p. 54]{Halmos}). In 
this paper, additionally, we assume that $\Lambda$ admits a density 
$\lambda$ with respect to $\ell \times \nu$, which is referred to as 
the intensity function. In particular, for a finite mark space, 
$\lambda(z,i) \nu(i) = \lambda_i(z)$ is the intensity function of $Y_i$.

Since for fixed $C$ the measure $\Lambda(\cdot \times C)$ is 
absolutely continuous with respect to the intensity measure $\Lambda_g$ 
of the ground process,
\begin{align}
\label{e:intensity}
\Lambda(B\times C) = \int_{B} M^{z}(C) \, \Lambda_g(dz).
\end{align}
Here $M^z(C)$ is the probability that the mark of a point at location
$z$ falls in $C$. The members of the family $\{ M^z: z \in \R^d \}$ of 
probability distributions on the Borel sets of $\M$ are called
{\em mark distributions}. 

If $Y$ is stationary, that is, if its distribution is invariant under 
translations of the locations, 
\(
\Lambda(B\times C) = \lambda \nu_M(C) \ell(B)
\)
for some probability distribution $\nu_M$ on $\M$, which is known as 
the {\em mark distribution}. In this case, we may take $\nu = \nu_M$ for the 
reference measure on $\M$ so that $\Lambda$ has constant intensity function 
$\lambda$ with respect to $\ell \times \nu_M$, and, moreover, 
$\lambda$ is the intensity of the ground process. 

Higher order `intensity functions' or \emph{product densities} can be 
defined as densities $\rho^{(n)}$ of the factorial moment measures 
provided these exist, in which case they satisfy the following $n$-th
order \emph{Campbell formula}. For any measurable function $f\geq 0$, 
the sum of $f$ over $n$-tuples of different points of $Y$ is a random 
variable with expectation
\begin{align} 
\label{CampbellMPP}
&\E\left[
\sum_{(z_1,m_1), \ldots, (z_n,m_n) \in Y}^{\neq}
f((z_1,m_1), \ldots, (z_n,m_n))
\right]
=
\\
&=
\int \cdots \int 
f((z_1,m_1), \ldots, (z_n,m_n)) \,
\rho^{(n)}((z_1,m_1), \ldots, (z_1,m_1)) \prod_{i=1}^n dz_i d\nu(m_i)
\nn
\end{align}
(with the left hand side being infinite if and only if the right hand side
is infinite). Note that $\rho^{(1)} = \lambda$, the intensity function. Also, 
$n$-point mark distributions $M^{z_1, \dots, z_n}(C_1 \times \cdots\times C_n)$
can be defined analogously to the case $n=1$. For further details, see 
for example the textbook \cite{SKM}. Note that, by the absolute continuity 
underlying the existence of $\rho^{(n)}$, there exist product densities 
$\rho_g^{(n)}(z_1, \ldots, z_n)$ for the ground process and densities 
$f_{z_1, \dots, z_n}$ of $M^{z_1, \dots, z_n}$ with respect to the $n$-fold product
of $\nu$ with itself such that
\[
M^{z_1, \dots, z_n}(C_1 \times \cdots \times C_n) =
\int_{C_1 \times \cdots \times C_n} f_{z_1, \dots, z_n}(m_1, \dots, m_n)
\prod_{i=1}^n d\nu(m_i).
\]
In particular, the intensity function of the ground process is given 
by $\lambda_g(z) = \rho_g^{(1)}(z)$ and $\lambda(z,m) = f_z(m) \lambda_g(z)$.

We will also need the related concept of {\em $n$-point correlation functions}
$\xi_n$, $n\geq1$, the intensity-reweighted densities of the factorial 
cumulant measures \cite[Section~9.5]{DVJ2}. These permutation invariant 
measurable functions are defined by the following recursive relation
(see e.g.\ \cite{MCJfunMPP,White}). Set $\xi_1 \equiv 1$ and, for $n\geq2$, 
\begin{align}
\label{CorrFuncMPP}
\sum_{k=1}^{n}
\sum_{E_1,\ldots,E_k}
\prod_{j=1}^{k}
\xi_{|E_j|}(\{(z_i,m_i):i\in E_j\})
&=
\frac{\rho^{(n)}((z_1,m_1),\ldots,(z_n,m_n))}{
\lambda(z_1,m_1) \cdots \lambda(z_n,m_n)},
\end{align}
where $\sum_{E_1,\ldots,E_k}$ is a sum over all possible $k$-sized partitions 
$\{E_1,\ldots,E_k\}$, $E_j\neq \emptyset$, of the set $\{1,\ldots,n\}$ and 
$|E_j|$ denotes the cardinality of $E_j$. 
Note that for a Poisson process, $\xi_n\equiv0$ for all $n\geq2$. 

\subsection{Palm measures and conditional intensities}
\label{S:semiPalm}

Let $Y$ be a simple marked point process whose intensity function exists. 
The summary statistics in this paper are defined in terms of {\em reduced Palm 
measures} satisfying the \emph{reduced Campbell-Mecke} formula which states 
that, for any measurable function $f \geq 0$,
\begin{align} 
\label{CMthmMPP}
&\E\left[ \sum_{(z,m) \in Y} f( (z,m), Y \setminus \{(z,m) \}) \right]
=
\int_{\R^d} \int_{\M}
\E^{!(z,m)} \left [f( (z,m), Y ) \right]
\lambda(z,m) \, dz d\nu(m)
\end{align}
(with the left hand side being infinite if and only if the right hand 
side is infinite). The probability measure $P^{!(z,m)}$ corresponding to 
$\E^{!(z,m)}$ can be interpreted as the conditional probability of 
$Y\setminus \{ (z,m) \}$ given that $Y(\{(z,m)\}) = 1$. 
For further details see \cite{DVJ2}.

A few remarks are in order. First, consider the special case that $Y$
is stationary and the reference measure on $\M$ is the mark distribution 
$\nu_M$. In this case, it is possible to define reduced Palm measures with 
respect to arbitrary mark sets. Specifically, for $C\in\BB(\M)$ such that 
$\nu(C) = \nu_M(C) > 0$, set
\begin{equation}
\label{e:Palm}
P_{C}^{!z}(R)
=
\frac{1}{\nu(C)}
\int_{C} P^{!(z,m)}(R) \, d\nu(m).
\end{equation}
Then, $P_C^{!z}$ does not depend on the choice of $z\in\R^d$ and is a 
probability measure \cite[Section~4.4.8]{SKM}. It can be interpreted as 
the conditional distribution of $Y$ on the complement of $ \{ z \} \times\M$, 
given that $Y$ places a point at $z$ with mark in $C$.

As a second example, consider multivariate point processes $(Y_1, \dots,
Y_k)$ and let $\nu$ be any finite measure on $\M = \{ 1, \dots, k \}$.
Now, we have a family of reduced Palm measures $P^{!(z,i)}$ for $i=1, \dots, k$
and we will restrict ourselves to sets of the form $C = \{ i \}$. Then
(\ref{e:Palm}) reads 
\[
P_C^{!z}(R) = \frac{1}{\nu(i)} \nu(i) P^{!(z,i)}(R) = P^{!(z,i)}(R) 
\]
and does not depend on the specific choice of $\nu$.

For non-finite mark spaces, the reference measure $\nu$ on $\M$ may not 
correspond to a well-defined mark distribution. One pragmatic approach is to 
take a finite partition of the mark space, $\M = \cup_{i=1}^k \M_i$, and 
proceed as in the multivariate case. 
An alternative is to use (\ref{e:Palm}) as definition for a $\nu$-averaged 
reduced Palm distribution with respect to $C$, bearing in mind that the 
definition does depend on the choice of $\nu$.

\subsection{Generating functionals}
\label{S:genFunc}

When product densities of all orders exist, the {\em generating functional}
$G(\cdot)$, which unique\-ly determines the distribution of $Y$ (see e.g.\ 
\cite[Thm 9.4.V.]{DVJ2}), is defined as follows.
For all mappings $v = 1 - u$ such that $u: \R^{d} \times \M \rightarrow
[0,1]$ is measurable with bounded support, set
\begin{align}
&\label{GFMPP}
G(v) 
= G(1-u)
= \E\left[ \prod_{(z,m) \in Y} v(z,m) \right]
\\
&=
1 + \sum_{n=1}^{\infty} \frac{(-1)^n}{n!} 
\int_{\R^d \times \M} \cdots \int_{\R^d \times \M} 
\rho^{(n)}((z_1,m_1), \ldots, (z_n,m_n)) 
\prod_{i=1}^{n} u(z_i, m_i) \, dz_i d\nu(m_i)
\nn
\\
&=
\exp\left[
\sum_{n=1}^{\infty} \frac{(-1)^n}{n!} 
\int_{\R^d \times \M} \cdots \int_{\R^d \times \M} 
\xi_n( (z_1,m_1), \ldots, (z_n,m_n) ) \right. \times
\nn
\\
&
\quad \quad \quad \quad \quad \quad \quad \quad \quad
\times \left. \prod_{i=1}^{n} u(z_i, m_i) \lambda(z_i, m_i) \, dz_i d\nu(m_i)
\right].
\nn
\end{align}
By convention, $\log 0 = -\infty$ and an empty product equals $1$. 
The last equalities holds provided that the right hand sides converge 
(see e.g.\ \cite[p.~126]{SKM}). 
Similarly, for $a\in\R^d$ and $C\in\BB(\M)$, we may define the 
generating functional $G^{!a}_{C}$ with respect to $P^{!a}_{C}$ 
(see the discussion around (\ref{e:Palm}))
by 
\begin{equation}
\label{PalmGFMPP}
G^{!a}_{C}(v) 
=
\frac{1}{\nu(C)} \int_{C} 
\E^{!(a,b)} \left[ \prod_{(z,m) \in Y} v(z,m) \right]
d\nu(b).
\end{equation}

\section{Definition of summary statistics}
\subsection{Inhomogeneous cross $D$-function}
\label{SectionFGfun} 

In this section, we define cross $D$-functions for marked point
processes in analogy with the inhomogeneous nearest neighbour
distance distribution function of \cite{MCJfun}. Write 
\[
\bar \lambda_D = \inf_{z\in\R^d, m\in D} \lambda( z, m ).
\]
Throughout we assume that $Y$ is a simple 
marked point process whose product densities of all orders 
exist and for which the $\xi_n$, $n\geq 2$, are translation 
invariant in the sense
that 
\[
\xi_n( (z_1 + a, m_1), \dots,  (z_n + a, m_n) ) = 
\xi_n( (z_1, m_1), \dots, (z_n, m_n))
\]
for all $a\in \R^d$ and $\ell\otimes\nu$-almost all $(z_i, m_i) \in
\R^d \times \M$. 
If, moreover, $\bar \lambda = \bar \lambda_{\M} > 0$, 
then $Y$ is said to be {\em intensity-reweighted moment stationary}
(IRMS).

\begin{definition} \label{DefinitionDfunction}
Let $Y$ be IRMS and let $C$ and $D$ be Borel sets in $\M$ with $\nu(C)$ and 
$\nu(D)$ strictly positive. Write $B(a,r)$ for the closed ball centred
at $a$ with radius $r$. Set
\beann
u_r^{a,D}(z,m) 
= \frac{\bar\lambda_D \1\{(z,m)\in B(a,r) \times D\}}{\lambda(z,m)},
\quad
a\in\R^d,
\quad
D\in\BB(\M),
\eeann 
and define, for $r\geq 0$, the \emph{inhomogeneous cross 
nearest neighbour distance distribution function} by
\bea
\label{Ginhom}
D_{\rm inhom}^{CD}(r) &=& 1 - G^{!0}_{C}(1-u_r^{0,D}) 
\\
&=& 1 - 
\frac{1}{\nu(C)} \int_{C} 
\E^{!(0,b)} \left[
\prod_{(z,m) \in Y}
\left( 1 - \frac{\bar\lambda_D \1\{ (z,m) \in B(0,r) \times D \}
}{\lambda(z,m)} \right)
\right]
d\nu(b).
\nn
\eea
\end{definition}

We shall show in Theorem~\ref{ThmMarkedJfun} below that the specific 
choice $a=0$ in (\ref{Ginhom}) is merely a matter of convenience.
Moreover, $\bar \lambda_D$ may be replaced by smaller strictly positive
scalars.

When $Y$ is stationary and $\nu = \nu_M$, the mark distribution, 
\begin{eqnarray*}
D^{CD}_{\rm{inhom}}(r) & = & 1 - 
\frac{1}{\nu_M(C)} \int_{C} \E^{!(0,b)} \left[ \prod_{(z,m) \in Y}
\1\{ (z,m) \notin B(0,r) \times D\}\right] d\nu_M(b) \\
& = & P^{!0}_C ( Y \cap B(0,r) \times D \neq \emptyset ),
\end{eqnarray*}
so that (\ref{Ginhom}) reduces to the $C$-to-$D$ nearest neighbour
distance distribution for marked point processes \cite{MCJfunMPP}.

\subsubsection{Multivariate point process}

Consider a multivariate point process $Y = (Y_1, \dots, Y_k)$ that is
intensity-reweighted moment stationary. Let $C = \{ i \}$ and
$D = \{ j \}$ for $i\neq j \in \{1, \dots, k\}$. Write $\bar \lambda_j
= \inf_{z\in\R^d} \lambda_j(z)$ and note that $\bar \lambda_D / \lambda(z,j)$
is equal to $\bar \lambda_j / \lambda_j(z)$. Therefore (\ref{Ginhom})
reduces to
\bea
\label{GinhomMulti}
D^{ij}_{\rm{inhom}}(r) = 1 - \E^{!(0,i)} \left[ \prod_{z \in Y_j} 
\left( 1 - \frac{\bar \lambda_j}{\lambda_j(z)} 
1\{ z \in B(0,r) \} \right) \right]  
\eea
which under the further assumption that  $Y$ is stationary is equal to
\[
 P^{!(0,i)}( Y_j \cap B(0,r) \neq \emptyset ),
\]
the classical cross nearest neighbour distance distribution function,
see e.g. \cite[Chapter~21]{Handbook}. If $Y$ is a Poisson process,
\[
D^{ij}_{\rm{inhom}}(r)  = 1 - \exp\left[ - \bar\lambda_j \ell(B(0,r)) \right].
\]
Smaller values of $D^{ij}_{\rm inhom}(r)$ suggest there are fewer points of 
type $j$ in the $r$-neigh\-bour\-hood, that is, inhibition; larger values indicate
that points of type $j$ are attracted by those of type $i$ at range $r$.
In the case $i=j$, we obtain the inhomogeneous $D$-function of $Y_i$. 

With $C = \{ i \}$ for some $i \in \{ 1, \dots, k \}$ and $D = \M =
\{ 1, \dots, k \}$, (\ref{Ginhom}) is equal to
\begin{equation}
\label{GinhomAny}
D^{i\bullet}_{\rm{inhom}}(r) = 1 -
\E^{!(0,i)}\left[
\prod_{(z,m)\in Y}
\left(1-\frac{\bar\lambda \1\{ z \in B(0,r) \}}{\lambda(z,m)}\right)
\right]
\end{equation}
for $r\geq 0$. Note that the function $u^{0,\M}_r$ may depend on 
$\nu$ through $\lambda(z,m)$. If we give equal weight to each member of 
$\M$, however, 
$\bar \lambda / \lambda(z,m) = \tilde \lambda / \lambda_m(z)$ is 
uniquely defined in terms of the intensity functions of the components
of $Y$ and the minimal marginal intensity $\tilde \lambda =
\inf \{ \lambda_i(z): z \in \R^d, i \in \{ 1, \dots, k \} \}$. 
If $Y$ is stationary, $D^{i\bullet}$ is the classic $i$-to-any nearest 
neighbour distance distribution.

\subsection{Inhomogeneous cross $J$-functions}\label{SectionJfun}

In this section, we define cross $J$-functions for marked point
processes in analogy with the inhomogeneous $J$-function of 
\cite{MCJfun}. Throughout we assume that $Y$ is a simple 
intensity-reweighted moment stationary point process.

\begin{definition}\label{JfunMPPdef}
Let $Y$ be IRMS and let $C$ and $D$ be Borel sets in $\M$ with $\nu(C)$ 
and $\nu(D)$ strictly positive. For $r\geq 0$ and $n \geq 1$, set
\beann
J_n^{CD}(r) 
&=&
\int_{C}
\int_{B(0,r) \times D} \cdots \int_{B(0,r) \times D} 
\xi_{n+1}((a,b), (z_1 + a, m_1), \ldots, (z_n + a, m_n)) 
\\
&&\times d\nu(b) \prod_{i=1}^n dz_i d\nu(m_i)
\eeann
and define the {\em inhomogeneous cross $J$-function} by
\begin{align}
\label{JfunMPP}
J_{\rm inhom}^{CD}(r) = 
\frac{1}{\nu(C)} \left( \nu(C) + 
\sum_{n=1}^{\infty} \frac{(-\bar \lambda_D)^n}{n!} J_n^{CD}(r) \right)
\end{align}
for all ranges $r \geq 0$ for which the series is absolutely convergent. 
\end{definition}

Note that there is an implicit dependence on $a \in \R^d$ in $J_n^{CD}(r)$ 
and consequently in $J_{\rm inhom}^{CD}(r)$. However, the IRMS assumption 
implies that all $J_n^{CD}(r)$ (and therefore $J_{\rm inhom}^{CD}(r)$) are 
$\ell$-almost everywhere constant. Furthermore, 
Cauchy's root test implies that whenever
\(
\limsup_{n\rightarrow\infty}
\left(
\frac{\bar\lambda_D^n}{n!} |J_n^{CD}(r)| \right)^{1/n} < 1,
\)
(\ref{JfunMPP}) is absolutely convergent.

When $Y$ is stationary and $\nu=\nu_M$, the mark distribution,
(\ref{JfunMPP}) reduces to the cross inhomogeneous $J$-function
for marked point processes introduced in \cite{MCJfunMPP} since in that
case $\bar\lambda_D = \bar \lambda_\M $ regardless of the choice of $D$. 
Finally, note that for a Poisson process, $\xi_n \equiv 0$ for 
$n\geq 2$, so $J_{CD}(r) \equiv 1$.
In general, the inhomogeneous $J$-function is not commutative with respect 
to the mark sets $C$ and $D$, $C\neq D$.

Looking closer at Definition \ref{JfunMPPdef}, we see that there is some 
resemblance between $J_{\rm inhom}^{CD}(r)$ and the cross inhomogeneous 
$K$-function defined in \cite[Def.~4.8]{MollerBook}. Indeed, 
truncation of the series in (\ref{JfunMPP}) at $n=1$ gives
\begin{eqnarray*}
J_{\rm inhom}^{CD}(r) -1
& \approx  &
- \frac{\bar\lambda_D}{\nu(C)} \int_C \int_{B(0,r)\times D} 
\left[
g( (0,b), (z,m) ) - 1
\right] d\nu(b) dz d\nu(m) \\
& = &
- \bar \lambda_D \nu(D) \left(
      K_{\rm inhom}^{CD}(r) - \ell(B[0,r])
\right),
\end{eqnarray*}
where 
\begin{equation}
\label{Kinhom}
K_{\rm inhom}^{CD}(r) = \frac{1}{\nu(C) \nu(D)} \int_C \int_{B(0,r)\times D}
g((0, m_1), (z, m_2) ) d\nu(m_1) dz d\nu(m_2)
\end{equation}
is the generalisation of the cross inhomogeneous $K$-function to our
set-up. 
Note that the inhomogeneous $K$-function defined by (\ref{Kinhom}) requires 
translation invariance of the two-point correlation function only, in which 
case $Y$ is said to be second-order intensity reweighted stationary (SOIRS). 
Heuristically, $J_{CD}(r) < 1$ suggests that points with marks in $D$ tend 
to cluster around points with marks in $C$ at range $r\geq 0$; $J_{CD}(r)>1$ 
indicates that points with marks in $D$ avoid those with marks in $C$ at 
range $r\geq 0$. This interpretation is confirmed by 
Theorem~\ref{ThmMarkedJfun} below.

Definition~\ref{JfunMPPdef} is hard to work with. A more natural
representation can be given in terms of the generating functional.
In order to do so, define the \emph{inhomogeneous empty space function} 
of $Y_D$, the marked point process $Y$ restricted to $\R^d \times D$, by
\bea
\label{Finhom}
1 - F_{\rm inhom}^{D}(r) = G(1-u_r^{a,D})
=
\E\left[
\prod_{(z,m) \in Y}
\left(1-\frac{\bar\lambda_D \1\{ (z,m) \in B(a,r) \times D \}}{
\lambda(z,m)} \right)
\right]
\eea
under the convention that empty products equal 1 and with $u_r^{a,D}$ as
in Definition~\ref{DefinitionDfunction}. As for $D_{\rm inhom}^{CD}$,
the definition does not depend on the choice of origin $a=0$ and 
$\bar \lambda_D$ may be replaced by smaller strictly positive scalars.
At this point, it is important to stress that for $D = \M$, 
$F_{\rm inhom}^D$ is not necessarily equal to $F_{\rm inhom}$, the 
empty space function of the ground process $Z$, since $u_r^{0, \M}$
depends on the marks both through the intensity function $\lambda$ and
the bound $\bar \lambda_\M$. 

\begin{thm}\label{ThmMarkedJfun}
Let $Y$ be as in Definition~\ref{JfunMPPdef}. Then, as a function 
of $a\in\R^d$, each $J_n^{CD}(r)$ is $\ell$-almost everywhere constant.
Moreover, if 
\[
\limsup_{n\rightarrow\infty} \left( 
\frac{(-\bar\lambda_D)^n}{n!} 
\int_{B(0,r) \times D} \cdots \int_{B(0,r) \times D}
\frac
{\rho^{(n)}((z_1, m_1), \ldots, (z_n, m_n))}
{\lambda(z_1,m_1) \cdots \lambda(z_n, m_n)}
\prod_{i=1}^{n} dz_i d\nu(m_i)
\right)^{1/n} 
\]
is strictly less than $1$, then, for almost all $a\in\R^d$, 
the $C$-to-$D$ inhomogeneous $J$-function of Definition~\ref{JfunMPPdef} 
satisfies
\beann
J_{\rm inhom}^{CD}(r) = 
\frac{1-D_{\rm inhom}^{CD}(r)}{1-F_{\rm inhom}^{D}(r)}
\eeann
for all $r\geq0$ for which $F_{\rm inhom}^{D}(r)\neq1$.
\end{thm}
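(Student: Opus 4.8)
The plan is to reduce the statement to two applications of the generating-functional formula (\ref{GFMPP}) in its exponential ($\xi$-) form: one for the ordinary functional $G$ (giving the empty space function) and one for the reduced Palm functional (giving the cross $D$-function), and then to take the quotient. Throughout I write $w_i = (z_i,m_i)$, $dw_i = dz_i\,d\nu(m_i)$, and abbreviate $G^{!(0,b)}(1-u) = \E^{!(0,b)}\bigl[\prod_{(z,m)\in Y}(1-u(z,m))\bigr]$, so that by (\ref{PalmGFMPP}) the averaged functional is $G^{!0}_C(1-u) = \frac{1}{\nu(C)}\int_C G^{!(0,b)}(1-u)\,d\nu(b)$.

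First I would settle the translation-invariance claim. Substituting $z_i\mapsto z_i-a$ in $J_n^{CD}(r)$ and invoking the assumed $\ell\otimes\nu$-almost everywhere translation invariance of $\xi_{n+1}$, the integrand $\xi_{n+1}((a,b),(z_1+a,m_1),\dots,(z_n+a,m_n))$ agrees with $\xi_{n+1}((0,b),(z_1,m_1),\dots,(z_n,m_n))$ for $\ell$-almost all $a$; since the integrations in the $(z_i,m_i)$ erase null sets, each $J_n^{CD}(r)$, and hence $J_{\rm inhom}^{CD}(r)$, is $\ell$-a.e.\ equal to its value at $a=0$. The same change of variables shows that $F_{\rm inhom}^D(r)$ of (\ref{Finhom}) and $D_{\rm inhom}^{CD}(r)$ of (\ref{Ginhom}) do not depend on the origin, so I may fix $a=0$ everywhere. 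The key algebraic observation is that $u_r^{0,D}(z,m)\lambda(z,m)=\bar\lambda_D\1\{(z,m)\in B(0,r)\times D\}$, whence the exponential line of (\ref{GFMPP}) gives
\[
1 - F_{\rm inhom}^D(r) = G(1-u_r^{0,D}) = \exp\left[\sum_{n=1}^\infty \frac{(-\bar\lambda_D)^n}{n!}\int_{B(0,r)\times D}\!\!\cdots\int_{B(0,r)\times D}\xi_n(w_1,\dots,w_n)\prod_{i=1}^n dw_i\right].
\]

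The heart of the proof is the corresponding Palm identity
\[
G^{!(0,b)}(1-u) = G(1-u)\left[1 + \sum_{m=1}^\infty \frac{(-1)^m}{m!}\int\!\cdots\!\int \xi_{m+1}((0,b),w_1,\dots,w_m)\prod_{i=1}^m u(w_i)\lambda(w_i)\,dw_i\right].
\]
To prove it I would use that the reduced Palm distribution $P^{!(0,b)}$ has product densities $\rho^{(n)}_{(0,b)}(w_1,\dots,w_n)=\rho^{(n+1)}((0,b),w_1,\dots,w_n)/\lambda(0,b)$, so that $\lambda(0,b)\,G^{!(0,b)}(1-u)$ is exactly the one-point insertion of the root $(0,b)$ into the product-density expansion of $G(1-u)$. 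Writing $G=\exp[T]$ with $T$ the exponent of (\ref{GFMPP}) and recognising this insertion as the functional derivative of $\exp[T]$ in the direction of a point at $(0,b)$, the chain rule factors out $G(1-u)$ and differentiates $T$, which pulls down precisely $\lambda(0,b)\sum_{m\geq0}\frac{(-1)^m}{m!}\int \xi_{m+1}((0,b),w_1,\dots,w_m)\prod u(w_i)\lambda(w_i)\,dw_i$; dividing by $\lambda(0,b)$ yields the identity. Equivalently, one expands the moment--cumulant relation (\ref{CorrFuncMPP}) and collects the partition blocks containing the root; as a sanity check, for a Poisson process all $\xi_{m+1}\equiv0$ and the identity collapses to Slivnyak's theorem, $G^{!(0,b)}=G$.

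Finally I would substitute $u=u_r^{0,D}$ into the Palm identity, so that $G(1-u_r^{0,D})=1-F_{\rm inhom}^D(r)$ factors out and the bracket becomes $1+\sum_{m\geq1}\frac{(-\bar\lambda_D)^m}{m!}\int_{(B(0,r)\times D)^m}\xi_{m+1}((0,b),w_1,\dots,w_m)\prod dw_i$. Averaging over $b\in C$ against $d\nu(b)/\nu(C)$ and interchanging the $\nu$-integral with the sum converts the bracket into $\frac{1}{\nu(C)}\bigl(\nu(C)+\sum_{m\geq1}\frac{(-\bar\lambda_D)^m}{m!}J_m^{CD}(r)\bigr) = J_{\rm inhom}^{CD}(r)$, giving $1-D_{\rm inhom}^{CD}(r) = J_{\rm inhom}^{CD}(r)\,(1-F_{\rm inhom}^D(r))$, which is the claim whenever $F_{\rm inhom}^D(r)\neq1$. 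The main obstacle is analytic rather than algebraic: I must justify that all three series converge and that summation may be interchanged with the integrals over $C$ and over $(B(0,r)\times D)^m$. This is exactly what the $\limsup$ hypothesis secures, since it bounds the normalised $\rho^{(n)}$-integrals geometrically, and through (\ref{CorrFuncMPP}) these dominate both the $\xi$-series and the Palm expansion; Fubini and the rearrangements above are then legitimate, and the root test delivers absolute convergence of (\ref{JfunMPP}).
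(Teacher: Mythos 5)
Your proposal is correct and follows essentially the same route as the paper's Appendix~A: expand $G(1-u_r^{a,D})$ and the reduced Palm generating functional in product densities, convert via the moment--cumulant relation (\ref{CorrFuncMPP}), split each partition according to the block containing the root $(0,b)$, and factor out $G(1-u_r^{0,D})$ so that the remaining series is exactly $J_{\rm inhom}^{CD}(r)$, yielding $1-D_{\rm inhom}^{CD}(r)=J_{\rm inhom}^{CD}(r)\bigl(1-F_{\rm inhom}^{D}(r)\bigr)$. The only cosmetic differences are that you package the root-insertion step as a per-mark Palm identity justified by the reduced-Palm product-density fact $\rho^{(n)}_{(0,b)}=\rho^{(n+1)}((0,b),\cdot)/\lambda(0,b)$ -- which holds for $(\ell\otimes\nu)$-almost all $(a,b)$ and is precisely what the paper proves self-containedly by applying the Campbell formula (\ref{CampbellMPP}) and the reduced Campbell--Mecke formula (\ref{CMthmMPP}) to the test function $g_r^B$, whence the ``for almost all $a$'' in the statement (so you should phrase your conclusion a.e.\ in $a$ rather than ``fix $a=0$ everywhere'') -- and your convergence discussion sits at the same informal level as the paper's own appeal to absolute convergence and basic combinatorial rearrangement.
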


The proof is technical and relegated to Appendix~A. 

\subsubsection{Multivariate point process}

Consider a multivariate point process $Y = (Y_1,  \dots, Y_k)$ that is 
intensity-reweighted moment stationary. By a suitable choice of mark set $D$,  
we obtain different types of inhomogeneous $J$-functions. 

First, take $C = \{ i \}$ and $D = \{ j \}$ for $i\neq j \in \{1, \dots, k\}$. 
Then, writing $F^j_{\rm inhom}$ for the inhomogeneous empty space function of 
$Y_j$ and recalling (\ref{GinhomMulti}), the statistic (\ref{JfunMPP}) is 
equal to 
\bea
\label{JfunMulti}
J^{ij}_{{\rm inhom}}(r) = \frac{ 1 - D_{\rm{inhom}}^{ij}(r) 
}{ 1 - F^j_{\rm{inhom}}(r)}
\eea
and compares the distribution of intensity-reweighted distances from a 
point of type $i$ to the nearest one of type $j$ to those from an arbitrary 
point to $Y_j$. Therefore, it generalises the $i$-to-$j$ cross $J$-function of
\cite{MCBaddeley2} for stationary multivariate point processes.

Set $C = \{ i \}$ for some $i \in \{ 1, \dots, k \}$ and $D = \M =
\{ 1, \dots, k \}$. Then, recalling (\ref{GinhomAny}), the statistic 
(\ref{JfunMPP}) can be written as 
\begin{equation}
\label{JfunAny}
J^{i \bullet}_{\rm{inhom}}(r)  =
\frac{1 - D^{i\bullet}_{\rm{inhom}}(r)}{ 1 - F^\M_{\rm{inhom}}(r) }
\end{equation}
and compares tails of the $i$-to-any nearest neighbour distance distribution
and the empty space function of $Y$. Note that if $\nu$ is proportional
to the counting measure, $F^\M_{\rm{inhom}}(r)$ can be expressed in terms of
the intensity functions of the components and the minimal marginal 
intensity (see the discussion following formula (\ref{GinhomAny})).  Hence,
$J^{i \bullet}_{\rm inhom}$ generalises the $i$-to-any $J$-function for 
stationary multivariate point processes \cite{MCBaddeley2}.

\section{Independence and random labelling}
\label{SectionIndependence}

In this section, we investigate the effect of various independence 
assumptions and marking schemes on our summary statistics. 

\subsection{Independent marking mechanisms}

Specific forms of marking are summarised in Definition~\ref{D:marking} below
\cite[Definition~6.4III]{DVJ1}.

\begin{definition} 
\label{D:marking} 
A marked point process $Y$
is called {\em independently marked} if, given the ground process
$Z$, the marks are independent random variables with a distribution 
that depends only on the corresponding location.
If, additionally, $M^z$ does not depend on the location, 
we say that $Y$ has the {\em random labelling property}. 
\end{definition}

\begin{prop}
\label{LemmaIndependentlyMarked}
Let $C$ and $D$ be Borel sets in $\M$ with $\nu(C),\nu(D)>0$ and
assume that $Y$ is independently marked. 
\begin{itemize}
\item[a)] 
If $Y$ is SOIRS, the ground process $Z$ is also SOIRS and
$K_{\rm inhom}^{CD}(r) = K^Z_{\rm inhom}(r)$, the inhomogeneous
$K$-function of $Z$.
\end{itemize}
Let $\E_Z^{!0}$ denote the expectation under the Palm distribution of the 
ground process $Z$ and write $u_r^0(z) = \bar \lambda_g \1\{ z \in B(0,r) \} /
\lambda_g(z)$, $z\in\R^d$, $c_D = \bar \lambda_D \nu(D) / \bar \lambda_g$.
Under the assumptions of Theorem~\ref{ThmMarkedJfun}, when $Y$ is 
independently marked, $Z$ is IRMS and
\begin{itemize}
\item[b)] 
$F_{\rm inhom}^{D}(r) = 1 - G_Z(1- c_D u_r^0 )$,
\item[c)] 
$D_{\rm inhom}^{CD}(r) = 1 - G_Z^{!0}(1- c_D u_r^0 )$,
\item[d)] 
$J_{\rm inhom}^{CD}(r) = G_Z^{!0}(1- c_D u_r^0 ) / G_Z(1- c_D u_r^0)$ 
for all $r\geq 0$ for which the denominator is non-zero.
\end{itemize}
If $Y$ is randomly labelled with $\nu = \nu_M$,  
then $c_D = \nu_M(D)$ and
\[
F_{\rm inhom}^{\M}(r) = F_{\rm inhom}^Z(r); \quad
D_{\rm inhom}^{C\M}(r) = D_{\rm inhom}^Z(r); \quad
J_{\rm inhom}^{C\M}(r) = J_{\rm inhom}^Z(r). 
\]
\end{prop}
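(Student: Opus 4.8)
The plan is to build everything on one structural fact about independent marking and then read off parts (a)--(d) and the random-labelling identities as specializations. First I would record the consequences of independent marking for the product densities. Writing $\lambda(z,m)=f_z(m)\lambda_g(z)$, independent marking forces the $n$-point mark densities to factorize, $f_{z_1,\dots,z_n}(m_1,\dots,m_n)=\prod_i f_{z_i}(m_i)$, so that $\rho^{(n)}((z_1,m_1),\dots,(z_n,m_n))=\rho_g^{(n)}(z_1,\dots,z_n)\prod_i f_{z_i}(m_i)$. The intensity-reweighted ratio on the right of (\ref{CorrFuncMPP}) then collapses to $\rho_g^{(n)}(z_1,\dots,z_n)/\prod_i\lambda_g(z_i)$, the analogous ratio for $Z$. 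Since the $\xi_n$ are defined by the same recursion driven by this ratio, uniqueness of its solution (Möbius inversion on the partition lattice) gives $\xi_n((z_1,m_1),\dots,(z_n,m_n))=\xi_n^Z(z_1,\dots,z_n)$: the correlation functions of $Y$ do not see the marks and coincide with those of $Z$. Translation invariance of the $\xi_n$ therefore transfers to $Z$, and integrating $\lambda(z,m)=f_z(m)\lambda_g(z)$ over $\M$ gives $\lambda_g(z)=\int_\M\lambda(z,m)\,d\nu(m)\geq\bar\lambda_\M\,\nu(\M)>0$, whence $\bar\lambda_g>0$ and $Z$ is IRMS (and SOIRS whenever $Y$ is, since already at order two $g=g_Z$). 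The same integration shows $\lambda_g(z)\geq\bar\lambda_D\,\nu(D)$, so $c_D u_r^0$ and $u_r^{0,D}$ take values in $[0,1]$ and are legitimate arguments of the generating functionals.

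Parts (a) and (b) are then immediate. For (a), which needs only the two-point version above, I would substitute $g((0,m_1),(z,m_2))=g_Z(0,z)$ into (\ref{Kinhom}); the mark integrals over $C$ and $D$ contribute $\nu(C)\nu(D)$, cancelling the prefactor and leaving $\int_{B(0,r)}g_Z(0,z)\,dz=K^Z_{\rm inhom}(r)$. For (b), I would condition (\ref{Finhom}) on the ground configuration $Z$; since the marks are independent given $Z$, the conditional expectation factorizes over points and the single-point mark average evaluates to
\[
\E_m\!\left[1-u_r^{0,D}(z,m)\right]
=1-\bar\lambda_D\,\1\{z\in B(0,r)\}\int_D\frac{f_z(m)}{f_z(m)\,\lambda_g(z)}\,d\nu(m)
=1-c_D u_r^0(z).
\]
Taking the outer expectation over $Z$ identifies $1-F_{\rm inhom}^{D}(r)$ with $G_Z(1-c_D u_r^0)$.

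Part (c) is the crux and the step I expect to cause the most trouble, since it must be run under the reduced Palm measure. The key lemma is that for an independently marked process the reduced Palm distribution at $(0,b)$ is the independent $M^z$-marking of the reduced Palm distribution of the ground process (with expectation $\E_Z^{!0}$), and in particular does not depend on the mark $b$ of the conditioning point. I would prove this by testing against functionals of product form in the reduced Campbell--Mecke formula (\ref{CMthmMPP}) and comparing with its ground-process counterpart, the conditional independence of the marks given $Z$ letting the mark of the Palm point decouple. Granting this, the single-point computation of (b) applies verbatim under $\E_Z^{!0}$, giving $\E^{!(0,b)}[\prod_{(z,m)\in Y}(1-u_r^{0,D}(z,m))]=\E_Z^{!0}[\prod_{z\in Z}(1-c_D u_r^0(z))]=G_Z^{!0}(1-c_D u_r^0)$ independently of $b$, so the average over $b\in C$ in (\ref{Ginhom}) is trivial and yields (c). Part (d) then follows by inserting (b) and (c) into the ratio $J_{\rm inhom}^{CD}(r)=(1-D_{\rm inhom}^{CD}(r))/(1-F_{\rm inhom}^{D}(r))$ supplied by Theorem~\ref{ThmMarkedJfun}.

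Finally, the random-labelling identities are a one-line specialization. Random labelling with $\nu=\nu_M$ means $M^z\equiv\nu_M$, so $f_z\equiv1$ and $\lambda(z,m)=\lambda_g(z)$; hence $\bar\lambda_D=\bar\lambda_g$ for every $D$ and $c_D=\bar\lambda_D\,\nu_M(D)/\bar\lambda_g=\nu_M(D)$. Taking $D=\M$ gives $c_\M=\nu_M(\M)=1$, so $c_\M u_r^0=u_r^0$; substituting this into parts (b), (c) and (d) collapses the right-hand sides to $1-G_Z(1-u_r^0)$, $1-G_Z^{!0}(1-u_r^0)$ and their ratio, which are exactly $F_{\rm inhom}^Z$, $D_{\rm inhom}^Z$ and $J_{\rm inhom}^Z$.
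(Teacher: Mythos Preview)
Your argument is correct and matches the paper in its foundations: both begin with the factorization $\rho^{(n)}=\rho_g^{(n)}\prod_i f_{z_i}$ under independent marking, deduce $\xi_n=\xi_n^Z$, and handle part~(a) identically. The genuine divergence is in parts~(b) and~(c). The paper does not condition on $Z$ or invoke a Palm structure lemma; instead it works entirely through the series expansion~(\ref{GFMPP}) and the analogous Palm expansion derived in the proof of Theorem~\ref{ThmMarkedJfun}, substitutes the factorized $\rho^{(n)}$, cancels the $f_{z_i}(m_i)$ against those in $\lambda(z_i,m_i)$, and integrates out the marks to pick up factors of $\nu(D)$. This yields directly the series representations of $G_Z(1-c_D u_r^0)$ and $G_Z^{!0}(1-c_D u_r^0)$. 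Your route via conditioning is more probabilistic and arguably more transparent, and your ``key lemma'' that $P^{!(0,b)}$ is the independent marking of $P_Z^{!0}$ is a standard fact whose Campbell--Mecke verification is exactly as you sketch; the cost is that this lemma sits outside the paper's generating-functional machinery and must be proved separately, whereas the paper's computation is self-contained once Theorem~\ref{ThmMarkedJfun} (Appendix~A) is in hand. Both reach the random-labelling identities by the same specialization $f_z\equiv 1$, $c_\M=1$.
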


\

\begin{proof}
Recall that 
\[
\rho^{(n)}( (z_1, m_1), \dots, (z_n, m_n) ) = 
f_{z_1, \dots, z_n}(m_1, \dots, m_n) \rho_g^{(n)}( z_1, \dots, z_n ).
\]
Under the independent marking assumption, 
\begin{equation}
\label{e:prodmark}
f_{z_1, \dots, z_n}(m_1, \dots, m_n) = \prod_{i=1}^n f_{z_i}(m_i).
\end{equation}
Therefore,
\[
\xi_n( (z_1, m_1), \dots, (z_n, m_n) ) = \xi^g_n( z_1, \dots, z_n ),
\]
the $n$-point correlation function of the ground process, so that $Z$ is 
(second order) intensity-reweighted moment stationary whenever $Y$ is.
Plugging (\ref{e:prodmark}) into (\ref{Kinhom}) yields
$K_{\rm inhom}^{CD}(r) = \int_{B(0,r)} g_g(0,z) \, dz $, the inhomogeneous
$K$-function of $Z$. 
Furthermore, under the assumption that the series expansion is
absolutely convergent, by (\ref{GFMPP}), (\ref{Finhom}) reduces to 
\begin{eqnarray*}
1 - F_{\rm inhom}^{D}(r) & = & 1 + \sum_{n=1}^\infty \frac{(-\bar \lambda_D)^n}{n!}
\int_{( B(0,r)\times D )^n}
\frac{\rho_g^{(n)}( z_1, \dots z_n)}{\lambda_g(z_1) \cdots \lambda_g(z_n)}
\prod_{i=1}^n dz_i d\nu(m_i) \\
& = & 1 + \sum_{n=1}^\infty \frac{(-\bar \lambda_D \nu(D))^n}{n!}
\int_{ B(0,r)^n}
\frac{\rho_g^{(n)}( z_1, \dots z_n)}{\lambda_g(z_1) \cdots \lambda_g(z_n)}
\, \prod dz_i.
\end{eqnarray*}
Similarly,
\[
1 - D_{\rm inhom}^{CD}(r) =  
1 + \sum_{n=1}^\infty \frac{(-\bar \lambda_D \nu(D))^n}{n!}
\int_{ B(0,r)^n}
\frac{\rho_g^{(n+1)}(0, z_1, \dots z_n)}{
\lambda_g(0) \lambda_g(z_1) \cdots \lambda_g(z_n)}
\, \prod dz_i.
\]
We conclude that $1 - D_{\rm inhom}^{CD}(r) = G_Z^{!0}(1 - \bar\lambda_D \nu(D) 
\1_{B(0,r)}(\cdot) / \lambda_g(\cdot))$ and $1 - F_{\rm inhom}^D(r) = 
G_Z(1 - \bar\lambda_D \nu(D) \1_{B(0,r)}(\cdot) / \lambda_g(\cdot))$.

Under random labelling, the right hand side of (\ref{e:prodmark}) is 
further simplified to $\prod_{i=1}^n f(m_i)$ for some 
probability density $f$ that does not depend on location. 
 If furthermore $\nu = M^z = \nu_M$, 
the mark distribution, the density is one, i.e.\ $f(m_i) \equiv 1$. Hence 
$\bar \lambda_D = \bar \lambda_g$ and $c_D = \nu_M(D)$. In particular $c_\M = 1$.
\end{proof}

Note that the summary statistics do not depend on the choice of $C$, 
but may depend on $D$ through $c_D$. 

\subsection{Independence}

Recall that we use the notation $Y_C$, $C\in\BB(\M)$, for the restriction of
$Y$ to $\R^d \times C$. If $Y_C$ and $Y_D$ are independent, then the $C$-to-$D$
cross $J$-function is identically $1$. More precisely, the following result
holds.

\begin{prop}
\label{LemmaIndependence1}
Consider two disjoint Borel sets $C, D \subseteq \M$ with $\nu(C)$ and
$\nu(D)$ strictly positive and assume that $Y_C$ and $Y_D$ are independent. 
Under the assumptions of Theorem \ref{ThmMarkedJfun}, 
\[ 
D_{\rm inhom}^{CD}(r)  = F_{\rm inhom}^D(r)
\]
so that $J_{\rm inhom}^{CD}(r)\equiv 1$ whenever well defined. 
\end{prop}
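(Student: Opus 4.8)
The plan is to expand both $1 - D_{\rm inhom}^{CD}(r)$ and $1 - F_{\rm inhom}^{D}(r)$ as power series in the product densities, mirroring the proof of Proposition~\ref{LemmaIndependentlyMarked}, and to show that independence of $Y_C$ and $Y_D$ forces these two series to agree term by term. Since $u_r^{0,D}$ is supported on $B(0,r)\times D$, the generating functional expansion (\ref{GFMPP}) turns (\ref{Finhom}) into a series whose $n$-th term is an integral of $\rho^{(n)}/(\lambda\cdots\lambda)$ over $(B(0,r)\times D)^n$. Writing the reduced Palm generating functional in (\ref{Ginhom}) through the reduced Palm product densities $\rho^{(n+1)}((0,b),\cdot)/\lambda(0,b)$ and averaging over $b\in C$ produces the corresponding series for $1 - D_{\rm inhom}^{CD}(r)$, now with an extra conditioning point $(0,b)$, $b\in C$; this is exactly the manipulation already carried out in the proof of Proposition~\ref{LemmaIndependentlyMarked}. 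Absolute convergence of both series, which legitimises these rearrangements, is furnished by the hypotheses of Theorem~\ref{ThmMarkedJfun}.

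The crux is a factorisation of the mixed product density. Because $C$ and $D$ are disjoint, a point carrying a mark $b\in C$ belongs to $Y_C$ while the points $(z_i,m_i)$ with $m_i\in D$ belong to $Y_D$; reading independence of $Y_C$ and $Y_D$ off the $(n+1)$-st order Campbell formula (\ref{CampbellMPP}) then yields
\[
\rho^{(n+1)}((0,b),(z_1,m_1),\dots,(z_n,m_n)) = \lambda(0,b)\,\rho^{(n)}((z_1,m_1),\dots,(z_n,m_n))
\]
whenever $b\in C$ and every $m_i\in D$. Substituting this identity into the series for $1 - D_{\rm inhom}^{CD}(r)$, the factor $\lambda(0,b)$ cancels against the one in the denominator, so the integrand of the outer integral over $C$ no longer depends on $b$ and coincides term by term with the series for $1 - F_{\rm inhom}^{D}(r)$. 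The outer integral then contributes only the normalising factor $\nu(C)^{-1}\nu(C)=1$, giving $D_{\rm inhom}^{CD}(r) = F_{\rm inhom}^{D}(r)$. The claim $J_{\rm inhom}^{CD}(r)\equiv 1$ follows at once from Theorem~\ref{ThmMarkedJfun}, since there $J_{\rm inhom}^{CD}(r)=(1-D_{\rm inhom}^{CD}(r))/(1-F_{\rm inhom}^{D}(r))$ for every $r$ with $F_{\rm inhom}^{D}(r)\neq 1$.

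I expect the main obstacle to be the clean justification of the product-density factorisation under independence, phrased so as to cover the $\ell\otimes\nu$-almost-everywhere ambiguity in the product densities and the interchange of summation with integration. As a cross-check I would also record the equivalent, more direct route: independence of $Y_C$ and $Y_D$ across the disjoint sets amounts to the vanishing of every correlation function $\xi_{n+1}((a,b),(z_1,m_1),\dots,(z_n,m_n))$ that mixes a mark $b\in C$ with marks $m_i\in D$. This makes each $J_n^{CD}(r)$ in Definition~\ref{JfunMPPdef} vanish and returns $J_{\rm inhom}^{CD}(r)=1$ directly, without recourse to Theorem~\ref{ThmMarkedJfun}.
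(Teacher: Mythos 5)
Your argument is correct and follows essentially the same route as the paper's own proof: expand $G^{!0}_C(1-u_r^{0,D})$ and $G(1-u_r^{0,D})$ as series via the proof of Theorem~\ref{ThmMarkedJfun}, use independence to factorise $\rho^{(n+1)}((0,b),(z_1,m_1),\ldots,(z_n,m_n))$ as $\lambda(0,b)\,\rho^{(n)}((z_1,m_1),\ldots,(z_n,m_n))$ for $b\in C$, $m_i\in D$, cancel $\lambda(0,b)$, and conclude term-by-term equality of the two series. Your supplementary cross-check---that independence kills every mixed cumulant $\xi_{n+1}$, so each $J_n^{CD}(r)$ vanishes and $J_{\rm inhom}^{CD}\equiv 1$ directly from Definition~\ref{JfunMPPdef}---is also sound and gives $J\equiv 1$ without the ratio representation, though it does not by itself yield the stated identity $D_{\rm inhom}^{CD}=F_{\rm inhom}^{D}$, for which your main argument is the right one.
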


If $Y$ is SOIRS,  $K_{\rm inhom}^{CD}(r) = \omega_d r^d = \ell(B(0,r))$ 
whenever $Y_C$ and $Y_D$ are independent by \cite[Proposition~4.4]{MollerBook}.

\begin{proof}

By the Campbell formula (\ref{CampbellMPP}), 
if $Y_C$ and $Y_D$ are independent, the product densities factorise with
respect to $C$ and $D$, i.e.
\begin{align*}
& \rho^{(n_C + n_D)}( (z_1, m_1), \dots, (z_{n_C}, m_{n_C}),
(\tilde z_1, \tilde m_1), \dots, (\tilde z_{n_D}, \tilde m_{n_D})) 
\\
= &\rho^{(n_C)}((z_1, m_1), \dots, (z_{n_C}, m_{n_C}) ) \,
\rho^{(n_D)}( (\tilde z_1, \tilde m_1), \dots, (\tilde z_{n_D}, \tilde m_{n_D})) 
\end{align*}
for almost all $(z_i, m_i) \in \R^d \times C$ and $(\tilde z_i, \tilde m_i)
\in \R^d \times D$. Then, by the proof of Theorem~\ref{ThmMarkedJfun}, 
\begin{align*}
&G^{!0}_{C}(1-u_r^{0,D})
=
1+
\frac{1}{\nu(C)} 
\times
\\
&
\sum_{n=1}^{\infty}
\frac{(-\bar\lambda_D)^n}{n!}
\int_{C} 
\bigg(
\int_{(B(0,r)\times D)^n}
\frac{
\rho^{(n+1)}((0,b),(z_1,m_1),\ldots,(z_n,m_n))
}{\lambda(0,b) \lambda(z_1,m_1) \cdots \lambda(z_n,m_n)}
\prod_{i=1}^{n} dz_i d\nu(m_i)
\bigg)
d\nu(b).
\end{align*}
The integrand factorises as 
\[
\frac{ \lambda(0,b) \rho^{(n)}((z_1,m_1),\ldots,(z_n,m_n))
}{\lambda(0,b) \lambda(z_1,m_1) \cdots \lambda(z_n,m_n)}
\]
so that
\(
G^{!0}_{C}(1-u_r^{0,D})
=
G(1-u_r^{0,D}).
\)
We conclude that 
$D_{\rm inhom}^{CD}(r) = F_{\rm inhom}^D(r)$ and $J_{\rm inhom}^{CD}(r)\equiv 1$. 
\end{proof}

Proposition~\ref{LemmaIndependence1} generalises well-known results 
for stationary multivariate point processes \cite{Digglebook,MCBaddeley2}.
The next result collects mixture formulae.

\begin{prop}
\label{LemmaIndependence2}
Let $C \subseteq \M$ be a Borel set with $0 < \nu(C) < \nu(\M)$.
Set $D = \M \setminus C$ and assume that $Y_C$ and $Y_D$ are independent. 
\begin{itemize}
\item[a)] If $Y$ is SOIRS,
 $K_{\rm inhom}^{C\M}(r) = \frac{\nu(D)}{\nu(\M)} \omega_d r^d +
\frac{\nu(C)}{\nu(\M)} K_{\rm inhom}^{Y_C}(r)$, where $K_{\rm inhom}^{Y_C}$ is 
the inhomogeneous $K$-function of $Y_C$. 
\end{itemize}
Write $c_A = \bar \lambda / \bar \lambda_A$ for $A\in\BB(\M)$.
Under the assumptions of Theorem \ref{ThmMarkedJfun}, 
\begin{itemize}
\item[b)]
$1 - F_{\rm inhom}^{\M}(r) 
=
G\left( 1 - c_C u_r^{0,C} \right)  
G\left( 1 - c_D u_r^{0,D} \right)
$, 
\item[c)]
$1 - D_{\rm inhom}^{C\M}(r) 
= 
G^{!0}_{C}\left( 1 - c_c u_r^{0,C} \right) 
G\left( 1 - c_D u_r^{0,D} \right) 
$,
\item[d)] 
$J_{\rm inhom}^{C\M}(r)
=
G^{!0}_{C}\left( 1 - c_C u_r^{0,C} \right)
\left/
G\left( 1 - c_C u_r^{0,C} \right)
\right.
$
for all $r\geq 0$ for which the denominator is non-zero. 
\end{itemize}
\end{prop}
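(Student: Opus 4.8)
The plan is to handle part (a) on its own, since it concerns the $K$-function and uses only SOIRS, and then to derive (b)--(d) together from a single product-splitting identity combined with the independence of $Y_C$ and $Y_D$. For (a), I would start from the definition (\ref{Kinhom}) of $K_{\rm inhom}^{C\M}$ and split the integral over $\M = C \sqcup D$ into a $C$-part and a $D$-part. On the $D$-part the two marks lie in the disjoint sets $C$ and $D$, so independence of $Y_C$ and $Y_D$ forces $\rho^{(2)}$ to factorise (exactly as in the proof of Proposition~\ref{LemmaIndependence1}), whence the pair correlation function equals $1$ there and the $D$-part integrates to $\nu(C)\nu(D)\,\omega_d r^d$. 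The $C$-part is, by (\ref{Kinhom}) applied to $Y_C$ with both mark sets equal to $C$, precisely $\nu(C)^2 K_{\rm inhom}^{Y_C}(r)$. Dividing by $\nu(C)\nu(\M)$ and collecting terms gives the stated mixture formula.

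For (b), the key observation is an exact rescaling identity for the test functions. Since $\bar\lambda_{\M} = \bar\lambda$ and $c_A = \bar\lambda/\bar\lambda_A$, one checks directly that $u_r^{0,\M}(z,m) = c_C\, u_r^{0,C}(z,m)$ when $m \in C$ and $u_r^{0,\M}(z,m) = c_D\, u_r^{0,D}(z,m)$ when $m \in D$; moreover $1 - c_C u_r^{0,C}$ equals $1$ off $\R^d \times C$, and likewise for $D$. Plugging $v = 1 - u_r^{0,\M}$ into the defining product (\ref{Finhom}) and splitting the product over $Y$ into its factors over $Y_C$ and $Y_D$, the two pieces become the products defining $G(1 - c_C u_r^{0,C})$ and $G(1 - c_D u_r^{0,D})$, respectively. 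Independence of $Y_C$ and $Y_D$ then factorises the expectation into the product of these two generating functionals, which is the claim.

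For (c), I would run the same product splitting inside the averaged reduced Palm expectation (\ref{PalmGFMPP}) defining $G^{!0}_C(1 - u_r^{0,\M})$. The extra ingredient is that, because $b \in C$ and $Y_C$, $Y_D$ are independent, the reduced Palm measure $P^{!(0,b)}$ only reshapes the $C$-component: $Y_D$ retains its original law and stays independent of the Palm-transformed $Y_C$. Hence under $\E^{!(0,b)}$ the $Y_C$-product and the $Y_D$-product decouple; averaging the $Y_C$-factor over $b \in C$ returns $G^{!0}_C(1 - c_C u_r^{0,C})$, while the $Y_D$-factor is the \emph{non-Palm} expectation $G(1 - c_D u_r^{0,D})$, which is constant in $b$ and pulls out of the integral. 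This yields (c). Part (d) is then immediate: by Theorem~\ref{ThmMarkedJfun}, $J_{\rm inhom}^{C\M} = (1 - D_{\rm inhom}^{C\M})/(1 - F_{\rm inhom}^{\M})$, and substituting (b) and (c) cancels the common factor $G(1 - c_D u_r^{0,D})$.

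The step requiring most care is the Palm assertion in (c), namely that conditioning on a point of $Y_C$ leaves $Y_D$ untouched and independent. Rather than invoke this purely probabilistically, I would make it rigorous at the level of product densities, as in the proof of Proposition~\ref{LemmaIndependence1}: expand $G^{!0}_C(1 - u_r^{0,\M})$ as a series in the intensity-reweighted $\rho^{(n+1)}$, split each integration mark over $\M = C \sqcup D$, and use the factorisation $\rho^{(n_C+n_D)} = \rho^{(n_C)}\rho^{(n_D)}$ across $C$ and $D$ together with the weight identity $\bar\lambda = c_C\bar\lambda_C = c_D\bar\lambda_D$. The combinatorial factor $\tfrac{1}{n!}\binom{n}{n_C} = \tfrac{1}{n_C!\,n_D!}$ then lets the resulting double sum be recognised as the Cauchy product $G^{!0}_C(1 - c_C u_r^{0,C})\,G(1 - c_D u_r^{0,D})$. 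Absolute convergence of all the series involved is guaranteed under the hypotheses of Theorem~\ref{ThmMarkedJfun}, which legitimises the rearrangement.
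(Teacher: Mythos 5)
Your proof is correct, and for parts (a), (b) and (d) it is essentially the paper's own argument: the same split of the $K$-integral over $\M = C \sqcup D$ with $g \equiv 1$ on cross pairs, the same rescaling identity $u_r^{0,\M} = c_C u_r^{0,C}$ on $\R^d \times C$ and $= c_D u_r^{0,D}$ on $\R^d \times D$ followed by factorising the product over the independent superposition $Y = Y_C \cup Y_D$, and the same cancellation of the common factor $G(1 - c_D u_r^{0,D})$ via Theorem~\ref{ThmMarkedJfun}. The one place you genuinely diverge is part (c): the paper disposes of the decoupling step in a single probabilistic line --- under each reduced Palm distribution $P^{!(0,b)}$ with $b \in C$, the points of $Y_{\M\setminus C}$ retain their unconditional law $P$ and remain independent of the $C$-component, so the $Y_D$-factor pulls straight out of the Palm expectation --- whereas you replace this by an explicit series verification: expand $G^{!0}_C(1-u_r^{0,\M})$ in intensity-reweighted $\rho^{(n+1)}$'s as in the proof of Theorem~\ref{ThmMarkedJfun}, split each mark integration over $C \sqcup D$, use $\rho^{(n_C+1+n_D)} = \rho^{(n_C+1)}\rho^{(n_D)}$ (with $(0,b)$ correctly grouped on the $C$ side) and the identity $\tfrac{1}{n!}\binom{n}{n_C} = \tfrac{1}{n_C!\,n_D!}$, and recognise the Cauchy product of the two generating-functional series, with absolute convergence licensing the rearrangement. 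Both routes are sound; the paper's is shorter but takes the Palm behaviour of independent superpositions as known, while yours is self-contained at the level of product densities --- in effect it proves that Palm fact in the only form needed --- at the cost of the combinatorial bookkeeping, and it stays consistent with how the paper itself handled Proposition~\ref{LemmaIndependence1}.
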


Note here that if we would have used the global infimum $\bar \lambda$
in (\ref{Ginhom}), (\ref{Finhom}) and Definition \ref{JfunMPPdef}, 
the constants $c_C$ and $c_D$ would vanish and, e.g.,
$J_{\rm inhom}^{C \M}(r) \equiv J_{\rm inhom}^{Y_C}(r)$ whenever defined.

\begin{proof}
As in the proof of Proposition~\ref{LemmaIndependence1},
$g((0,m_1), (z, m_2)) = 1$ if $m_1 \in C$ and $m_2 \in \M \setminus C$, so that
\beann
K_{\rm inhom}^{C\M}(r) 
&=&
\frac{1}{\nu(C) \nu(\M)} \int_C \int_{B(0,r)\times \M}
g((0,m_1), (z, m_2)) \, d\nu(m_1) dz d\nu(m_2)
\\
&=&
\frac{\nu(\M\setminus C)}{\nu(\M)} \omega_d r^d 
+ 
\frac{\nu(C)}{\nu(\M)} K_{\rm inhom}^{CC}(r).
\eeann

Since $Y=Y_C\cup Y_{\M\setminus C}$ is the superposition of independent point
processes,
\beann
1 - F_{\rm inhom}^{\M}(r) 
&=& 
\E\left[
\prod_{(z,m)\in Y}
\left(1 - 
\frac{\bar{\lambda} \1\{(z,m)\in B(0,r)\times \M \}}{\lambda(z,m)}
\right)
\right]
\\
&=&
G\left(1 - \frac{\bar\lambda}{\bar\lambda_{\M\setminus C}}
u_r^{0,\M\setminus C}\right) 
G\left(1 - \frac{\bar\lambda}{\bar\lambda_{C}}
u_r^{0,C}\right).
\eeann
Furthermore, under each (conditional) distribution $P^{!(0,b)}$, $b\in C$, the 
points of $Y_{\M\setminus C}$ follow the distribution $P$, hence
\beann
D_{\rm inhom}^{C\M}(r) &=& 
1 - 
\frac{1}{\nu(C)} \int_{C} 
\E^{!(0,b)}\left[
\prod_{(z,m)\in Y}
\left(1-\frac{\bar\lambda \1\{(z,m)\in B(0,r) \times \M\}}{\lambda(z,m)}\right)
\right]
d\nu(b)
\\
&=& 
1 - 
\E\left[
\prod_{(z,m)\in Y \cap (B(0,r) \times \M \setminus C)}
\left(1 - \frac{\bar\lambda}{\lambda(z,m)} \right)
\right]
\times
\\
&&\times
\frac{1}{\nu(C)} 
\int_{C} 
\E^{!(0,b)}
\left[
\prod_{(z,m)\in Y \cap (B(0,r)\times C)}
\left(1 - \frac{\bar\lambda}{\lambda(z,m)} \right)
\right]
d\nu(b)
\\
&=& 
1 - 
G\left(1 - \frac{\bar\lambda}{\bar\lambda_{\M\setminus C}} 
u_r^{0,\M\setminus C}\right) 
G^{!0}_{C}\left(1 - \frac{\bar\lambda}{\bar\lambda_{C}} 
u_r^{0,C}\right).
\eeann
\end{proof}

\section{Statistical inference}
\label{SectionEstimation}

The goal of this section is to construct non-parametric estimators and 
tests.

Although defined on all of $\R^d$, in practice, 
the ground process $Z$ is observed only in some compact spatial
region $W \subseteq \R^d$ with boundary $\partial W$. In order to 
deal with edge effects, we apply a minus sampling scheme (see e.g.\ 
\cite{SKM} for further details). The underlying idea is that when one
is interested in the interactions up to range $r$, only observations 
in the eroded set
\[
W_{\ominus r} = \{ z \in W: d_{\R^d}(z, \partial W) \geq r \}
\]
are taken into account. For clarity of exposition, we assume that 
the intensity function $\lambda(z,m)$ is known. If this is not the 
case, an estimator $\hat{\lambda}$ may be plugged in. 

In Section~\ref{EstimateD}, we derive estimators for our inhomogeneous
$D$- and $J$-functions; estimation of the cross $K$-function is discussed in 
\cite{MollerBook}. Finally, Section~\ref{SectionTesting} is devoted 
to non-parametric tests for the independence and random labelling 
assumptions (cf.\ Section~\ref{SectionIndependence}).

\subsection{Estimation}
\label{EstimateD}

Let $Y$ be an intensity-reweighted moment stationary marked point process
and consider the estimators
\begin{equation}
\label{EstD}
\sum_{(a,b)\in Y\cap (W_{\ominus r} \times C)} 
\frac{1}{\lambda(a,b)} \left[
\prod_{(z,m)\in (Y\setminus \{(a,b)\})\cap(B(a,r)\times D)}
\left(1-\frac{\bar\lambda_D}{\lambda(z,m)}\right)
\right]
\end{equation}
and
\begin{equation}
\label{EstF}
\frac
{1}{|L\cap W_{\ominus r}|}
\sum_{l\in L\cap W_{\ominus r}}
\left[
\prod_{(z,m)\in Y \cap (B(l,r) \times D)}
\left(1-\frac{\bar\lambda_D}{\lambda(z,m)}\right)
\right],
\end{equation}
where $L\subseteq W$ is some finite point grid.
The following unbiasedness result holds.

\begin{lemma}
\label{L:EstD}
Suppose $C, D\in\BB(\M)$ have strictly positive $\nu$-content.
Then, under the assumptions of Theorem~\ref{ThmMarkedJfun},
provided $\ell(W_{\ominus r}) > 0$, (\ref{EstD}) and (\ref{EstF})
are unbiased estimators of, respectively,
\(
\ell(W_{\ominus r})\nu(C) G^{!0}_{C}(1-u_r^{0,D})
\)
and $G(1-u_r^{0,D})$.
\end{lemma}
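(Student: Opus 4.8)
The plan is to establish the two unbiasedness claims separately, each by an application of a Campbell-type formula together with the defining expressions from Section~\ref{SectionFGfun}. Both estimators have the same inner structure --- a finite product over points of $Y$ lying in $B(\cdot,r)\times D$ of the factor $1-\bar\lambda_D/\lambda(z,m)$ --- so the essential analytic work is to recognise this product as the integrand appearing in the generating functional $G$ and its reduced Palm analogue $G^{!0}_C$.

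For the estimator (\ref{EstF}), I would first fix a grid point $l\in L\cap W_{\ominus r}$ and observe that, since the product is taken over all points of $Y$ (with the convention that an empty product equals $1$),
\[
\E\left[\prod_{(z,m)\in Y\cap(B(l,r)\times D)}\left(1-\frac{\bar\lambda_D}{\lambda(z,m)}\right)\right]
=
\E\left[\prod_{(z,m)\in Y}\left(1-u_r^{l,D}(z,m)\right)\right]
=
G(1-u_r^{l,D}),
\]
by the definition of $u_r^{l,D}$ in Definition~\ref{DefinitionDfunction} and of $G$ in (\ref{GFMPP}). The IRMS assumption (specifically the translation invariance built into Theorem~\ref{ThmMarkedJfun}) guarantees that $G(1-u_r^{l,D})=G(1-u_r^{0,D})$ does not depend on the choice of $l$. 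Averaging over the finitely many grid points and using linearity of expectation then gives that (\ref{EstF}) has expectation $G(1-u_r^{0,D})$; the normalisation by $|L\cap W_{\ominus r}|$ is exactly what removes the dependence on the number of grid points.

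For the estimator (\ref{EstD}), the main device is the reduced Campbell--Mecke formula (\ref{CMthmMPP}) applied to the function
\[
f\bigl((a,b),Y\bigr)
=
\frac{\1\{a\in W_{\ominus r},\,b\in C\}}{\lambda(a,b)}
\prod_{(z,m)\in Y\cap(B(a,r)\times D)}\left(1-\frac{\bar\lambda_D}{\lambda(z,m)}\right),
\]
where after removing the point $(a,b)$ the sum over points of $Y$ in (\ref{EstD}) becomes a sum of the form appearing on the left of (\ref{CMthmMPP}). The factor $1/\lambda(a,b)$ cancels against the intensity weight $\lambda(a,b)$ on the right-hand side of (\ref{CMthmMPP}), leaving
\[
\int_{W_{\ominus r}}\int_C
\E^{!(a,b)}\left[\prod_{(z,m)\in Y}\left(1-u_r^{a,D}(z,m)\right)\right]
\,d\nu(b)\,da.
\]
The inner expectation is precisely the Palm generating functional $\E^{!(a,b)}[\prod_Y(1-u_r^{a,D})]$; invoking the translation-invariance conclusion of Theorem~\ref{ThmMarkedJfun}, this equals $\E^{!(0,b)}[\prod_Y(1-u_r^{0,D})]$ for $\ell$-almost all $a$, which is independent of $a$. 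Integrating the constant over $b\in C$ recovers $\nu(C)\,G^{!0}_C(1-u_r^{0,D})$ through the definition (\ref{PalmGFMPP}), and integrating the remaining constant over $a\in W_{\ominus r}$ contributes the factor $\ell(W_{\ominus r})$, giving the claimed expectation.

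The step I expect to be the main obstacle is the careful justification that passing from the location $a$ to the origin $0$ inside the Palm expectation is legitimate, i.e.\ that the $\ell$-almost-everywhere constancy asserted in Theorem~\ref{ThmMarkedJfun} transfers to the integral over $W_{\ominus r}$ without leaving a residual dependence on $a$; this is exactly where the erosion by $r$ and the translation invariance of the $\xi_n$ must be used together, and where one must check that the exceptional $\ell$-null set of $a$-values does not affect the integral. A secondary technical point is confirming that $f\geq 0$ and that the interchange of expectation and the (possibly infinite) product is valid, so that (\ref{CMthmMPP}) applies with both sides finite; here the convergence hypothesis inherited from Theorem~\ref{ThmMarkedJfun} does the work.
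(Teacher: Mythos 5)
Your proposal is correct and follows essentially the same route as the paper: the reduced Campbell--Mecke formula plus Fubini for (\ref{EstD}), with the weight $1/\lambda(a,b)$ cancelling against the intensity, and the $l$-independence of $G(1-u_r^{l,D})$ for (\ref{EstF}). One small reordering brings it exactly in line with the paper's proof: the $\ell$-almost-everywhere constancy in $a$ supplied by (the appendix proof of) Theorem~\ref{ThmMarkedJfun} concerns the $C$-averaged quantity $G^{!a}_{C}(1-u_r^{a,D})$ rather than the Palm expectation $\E^{!(a,b)}[\cdot]$ pointwise in $b$, so you should integrate over $b\in C$ first and then pull the constant out as $\ell(W_{\ominus r})\,\nu(C)\,G^{!0}_{C}(1-u_r^{0,D})$, which resolves the ``main obstacle'' you flag without any additional argument.
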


\begin{proof}
By the Campbell--Mecke formula and Fubini's theorem,
the expectation of (\ref{EstD}) is equal to
\begin{align*}
&
\int_{W_{\ominus r}}\int_{C}
\E^{!(a,b)}
\left[
\frac{1}{\lambda(a,b)}
\prod_{(z,m)\in Y}
\left(1-\frac{\bar\lambda_D 
\1\{ (z,m) \in B(a,r) \times D \}}{\lambda(z,m)}\right)
\right]
\lambda(a,b) \, da d\nu(b)
\\
&=
\ell(W_{\ominus r})
\int_{C}
\E^{!(0,b)}
\left[
\prod_{(z,m)\in Y}
\left(1-\frac{\bar\lambda_D 
\1\{ (z,m) \in B(0,r) \times D \}}{\lambda(z,m)}\right)
\right]
d\nu(b),
\end{align*}
which can be written in generating functional terms as
 $\ell(W_{\ominus r})\nu(C) G^{!0}_{C}(1-u_r^{0,D})$.

The expectation of (\ref{EstF}) is 
\[
\frac
{1}{|L\cap W_{\ominus r}|}
\sum_{l\in L\cap W_{\ominus r}} 
G( 1 - u_r^{l,D}).
\]
Since the summands do not depend on $l$, the required unbiasedness follows.
\end{proof}

Lemma~\ref{L:EstD} implies that an estimator of $G^{!0}_{C}(1-u_r^{0,D})$ 
can be obtained from (\ref{EstD}) upon division by $\ell(W_{\ominus r})\nu(C)$. 
For irregular windows, however, the volume of $W_{\ominus r}$ may be difficult 
to compute. To overcome this problem, we use the Hamilton-principle
advocated in \cite{Stoyan} and estimate $\ell(W_{\ominus r})\nu(C)$ by
\[
\sum_{(a,b)\in Y \cap (W_{\ominus r} \times C) } \frac{1}{\lambda(a,b)}.
\]
The result is a ratio-unbiased estimator with the desirable property that
it takes the value one at $r = 0$. Simulations suggest that the Hamilton
principle is also helpful in reducing the sensitivity of the estimator 
with respect to misspecification of the intensity function $\lambda$.
To define a non-parametric estimator for $J_{\rm inhom}^{CD}(r)$, 
we use Theorem \ref{ThmMarkedJfun} and plug in the estimators for the
numerator and denominator discussed above.

\subsection{Hypothesis testing}
\label{SectionTesting}

In Section~\ref{SectionIndependence} we encountered two interaction 
hypotheses: random labelling and independence. Such hypotheses are
complex, depending as they do both on the marginal distribution
of the components of interest and the marking structure or interactions
between them. Nevertheless, it is possible to construct non-parametric
Monte Carlo tests by proper conditioning 
\cite{BesaDigg77,LotwickSilverman,Ripl77} based on a realisation of
$Y$ with locations in some compact window $W \subseteq \R^d$. 

First consider the random labelling hypothesis of Definition~\ref{D:marking}.
Since conditional on the ground process $Z \cap W = \{ z_1, \dots, z_n \}$, the
marks are independent and identically distributed, a Monte Carlo test 
may be based on random permutations of the $n$ observed marks -- in 
effect conditioning on the empirical mark distribution. Another approach
would be to sample the marks according to the mark distribution, but the
latter is typically unknown in practice.

In general, testing for independence is hard.
For hyper-rectangular windows, a Lotwick-Silverman type test 
\cite{LotwickSilverman} can be constructed. Recall that when $Y$ is stationary, 
the key idea to test for independence of $Y_C$ and $Y_D$ 
(where $C, D \in \BB(\M)$ have strictly positive $\nu$-measure)
is to wrap $Y$ onto a torus by identifying opposite sides of $W$, keeping 
$Y_C \cap (W\times \M)$ fixed and translating $Y_D \cap (W\times \M)$ 
randomly over the torus (or vice versa). 
Since the random translations leave the distribution of the $D$-component 
unchanged, they can be used for testing. Note that this approach is 
conditional on the marginal structures of the $C$-marked and $D$-marked 
patterns.

For inhomogeneous marked point processes, randomly translating $Y_D$ 
may change its distribution. To compensate, we must also translate 
the intensity. More specifically, consider the random measure
\begin{equation}
\label{e:measure}
\Xi^Y = \sum_{(z,m) \in Y} \frac{\delta_{(z,m)}}{\lambda(z, m)}
\end{equation}
and let 
\(
\Xi_a^Y 
\)
be its translation over $a\in \R^d$, that is, 
\[
\Xi_a^Y(A) = \Xi^Y(A_{-a}) =
\sum_{(z,m)\in Y} \frac{ \1\{ (z+a, m) \in A \} }{ \lambda(z,m) } =
\sum_{(z,m)\in (Y + a)} 
\frac{ \1\{ (z, m) \in A \} }{ \lambda(z-a,m) }.
\]
Note that $Y$ is translated over $a$ and $\mu(z,m) = \lambda(z-a, m)$ 
is a translation of $\lambda$ over the spatial vector $a$.
Moreover, if $Y$ is intensity-reweighted moment stationary, $\Xi$ is
moment stationary. In other words, for any $a$, $\Xi_a$ has the same 
factorial moment measures as (\ref{e:measure}).

\begin{prop}
\label{LemmaLotwickSilverman}
Let the assumptions of Theorem~\ref{ThmMarkedJfun} be satisfied
and $C, D$ be disjoint Borel mark sets with strictly positive $\nu$-measure.
If $Y_C$ and $Y_D$ are independent and $\Xi$ is stationary, then 
$\widehat D_{\rm inhom}^{CD}$, 
$\widehat F_{\rm inhom}^{D}$ and $\widehat J_{\rm inhom}^{CD}$ can be expressed 
in terms of $(\Xi^{Y_C}, \Xi^{Y_D})$ and their laws are invariant under 
translations of a component.
\end{prop}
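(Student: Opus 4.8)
The plan is to prove the two assertions separately: first that each of the three estimators is a measurable functional of the pair $(\Xi^{Y_C},\Xi^{Y_D})$, and then that translating one of the two component measures preserves the joint law of this pair, whence the law of any functional of it. Throughout, the key structural fact I would exploit is that the intensity enters the estimators only through the weights $1/\lambda(z,m)$, which are exactly the masses the random measure $\Xi^Y$ assigns to its atoms, while $\bar\lambda_D$ is a fixed deterministic scalar and the window $W_{\ominus r}$ and grid $L$ are fixed.

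For the expressibility, I would rewrite (\ref{EstF}) and (\ref{EstD}) in terms of atoms. Restricting $\Xi^Y$ to the mark stratum $\R^d\times D$ yields $\Xi^{Y_D}$, whose atoms are the points of $Y_D$ carrying precisely the weights appearing in (\ref{EstF}); since the ground process is simple these weights, and hence each factor $1-\bar\lambda_D/\lambda(z,m)$, are recoverable from the atoms of $\Xi^{Y_D}$ lying in $B(l,r)\times D$. Summing the resulting products over the fixed grid $L\cap W_{\ominus r}$ exhibits $\widehat F_{\rm inhom}^{D}$ as a functional of $\Xi^{Y_D}$ alone. In (\ref{EstD}) the outer sum runs over $(a,b)\in Y_C\cap(W_{\ominus r}\times C)$ with the very weights carried by $\Xi^{Y_C}$, and — because $C\cap D=\emptyset$ forces $(a,b)\notin B(a,r)\times D$ — the inner product is again a product over atoms of $\Xi^{Y_D}$ in $B(a,r)\times D$; together with the Hamilton-type normaliser $\sum 1/\lambda(a,b)=\Xi^{Y_C}(W_{\ominus r}\times C)$ this shows $\widehat D_{\rm inhom}^{CD}$ is a functional of $(\Xi^{Y_C},\Xi^{Y_D})$. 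As $\widehat J_{\rm inhom}^{CD}$ is formed by plugging these two estimators into the ratio of Theorem~\ref{ThmMarkedJfun}, it inherits the same property.

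For the invariance, I would first note that independence of $Y_C$ and $Y_D$ makes $\Xi^{Y_C}$ and $\Xi^{Y_D}$ independent, being measurable functions of disjoint parts of $Y$. Next, spatial translation acts only on the $\R^d$-coordinate and therefore commutes with the deterministic operation of restricting a measure on $\R^d\times\M$ to the fixed stratum $\R^d\times D$; explicitly, $\Xi_a^Y$ restricted to $\R^d\times D$ equals $\Xi_a^{Y_D}$. Hence stationarity of $\Xi=\Xi^Y$ passes to the component, so $\Xi_a^{Y_D}$ and $\Xi^{Y_D}$ are equal in law for every $a\in\R^d$ (and likewise for $\Xi^{Y_C}$). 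Combining independence with this marginal stationarity gives $(\Xi^{Y_C},\Xi_a^{Y_D})$ equal in law to $(\Xi^{Y_C},\Xi^{Y_D})$, so applying the functionals of the previous step shows that the triple $(\widehat D_{\rm inhom}^{CD},\widehat F_{\rm inhom}^{D},\widehat J_{\rm inhom}^{CD})$ has a law invariant under translation of the $D$-component; the argument for the $C$-component is symmetric.

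The main obstacle I anticipate is the measurability and well-definedness in the expressibility step: one must verify that the passage from the random measure $\Xi^{Y_D}$ back to the individual factors $1-\bar\lambda_D/\lambda(z,m)$ is legitimate, i.e. that the atoms and their masses are genuine measurable features of $\Xi^{Y_D}$. This rests on simplicity of the ground process $Z$, which guarantees distinct atom locations and hence that no two atoms are merged. Once this bookkeeping is in place, the probabilistic content — independence together with stationarity transferred to the restricted measure — is routine.
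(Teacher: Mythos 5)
Your proof is correct and follows the same two-step architecture as the paper's: first exhibit the three estimators as measurable functionals of the pair $(\Xi^{Y_C},\Xi^{Y_D})$, then combine independence of the components with translation invariance of the law of $\Xi^{Y_D}$ (inherited from stationarity of $\Xi$, since restriction to the stratum $\R^d\times D$ commutes with spatial translation) to conclude $(\Xi^{Y_C},\Xi_a^{Y_D})\stackrel{d}{=}(\Xi^{Y_C},\Xi^{Y_D})$ and hence invariance of the law of the estimator triple. The one real difference lies in the device used for the expressibility step. The paper expands each product $\prod\bigl(1-\bar\lambda_D\1\{\cdot\}/\lambda\bigr)$ by inclusion--exclusion into a series whose $n$-th term is the $n$-th factorial power measure $\xi^{[n]}_{Y_D}$ of $\Xi^{Y_D}$ evaluated at $B(l,r)^n$, and for (\ref{EstD}) integrates the resulting series in $z$ against $d\Xi^{Y_C}$ over $W_{\ominus r}\times C$; this writes the estimators directly as functionals of the measures, in a form consistent with the series manipulations of Theorem~\ref{ThmMarkedJfun}, and never needs to identify individual atoms. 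You instead recover the atoms and their masses from $\Xi^{Y_D}$ --- legitimate, since the ground process is simple and atom extraction from a locally finite measure on a Polish space is a measurable operation --- and reassemble the finite products; this is more elementary and avoids any series bookkeeping, at the price of the measurable-selection point you correctly flag and justify. Both routes rest on the same two observations that make (\ref{EstD}) tractable: disjointness of $C$ and $D$ guarantees the excluded point $(a,b)$ never enters the inner product, and the Hamilton normaliser equals $\Xi^{Y_C}(W_{\ominus r}\times C)$. Either argument is complete.
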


\begin{proof}
For $a \in \R^d$,
\begin{align*}
&
\prod_{(z,m)\in (Y+a)}
\left(1-\frac{\bar\lambda_D \1\{ (z,m) \in B(0,r) \times D \}}{
\lambda(z-a, m)}\right)
\\
&=
1 + \sum_{n=1}^{\infty}\frac{(-\bar\lambda_D)^n}{n!} \left[
\sum_{(z_1, m_1), \ldots, (z_n, m_n) \in (Y+a)}^{\neq}
\prod_{i=1}^{n}
\frac{\1\{(z_i, m_i) \in B(0,r) \times D \}}{\lambda(z_i- a, m_i)} \right]
\end{align*}
by the local finiteness of $Y$ and an inclusion-exclusion argument. 
The inner summand is the $n$-th order factorial power measure 
$\xi^{[n]}_{Y_D}$ of
$\Xi^{Y_D}_a$ evaluated at $B(0,r)^n$. Hence (\ref{EstF}) is a function 
of $\Xi^{Y_D}$. Furthermore, as $C$ and $D$ are disjoint, (\ref{EstD})
can be written as
\[
\int_{W_{\ominus r}\times C} \left( 
1 + \sum_{n=1}^{\infty}\frac{(-\bar\lambda_D)^n}{n!} 
\xi^{[n]}_{Y_D}(B(z,r)^n) \right) 
d\Xi^{Y_C}(z,m),
\]
which is well-defined by the local finiteness of $Y$. Finally
$\sum_{(z,m) \in Y \cap (W_{\ominus r} \times C)} \lambda(z,m)^{-1} =
\Xi^{Y_C}(W_{\ominus r}\times C)$ is a function of $\Xi^{Y_C}$ only.
Since $( \Xi^{Y_C}, \Xi^{Y_D}_a ) \stackrel{d}{=} ( \Xi^{Y_C}, \Xi^{Y_D} )$
by the independence of $Y_C$ and $Y_D$ and the invariance under
translation of the law of $\Xi^{Y_D}$, the proof is complete. 
\end{proof}


\section{Application}
\label{SectionApplications}

\subsection{Data description}

In this section, we will apply our statistics to data which are 
presented in \cite{Turner} and available in the {\tt R} package 
{\em spatstat} \cite{BaddeleyTurner}. These data were collected by 
the New Brunswick Department of Natural Resources and cleaned by
Professor Turner. They contain records of wildfires which occurred 
in New Brunswick, Canada, during the years 1987 through 2003; 
records for 1988 are missing. 

More formally, the data $\{ (z_i, m_i, t_i) \}_{i=1}^{n_0}$ consist
of $n_0 = 7108$ recordings of spatial locations $z_i$ of wildfires. 
Attached to each location are two marks: $t_i\in T = \{ 1987, 1989,
\ldots, 2003 \}$ gives the year of occurrence and $m_i$ indicates the 
fuel type. There are four types, of which the dominant one is `forest'
(accounting for some $65\%$ of the fires). The other three types account 
for only about ten percent of the fires each. 

Below, we will quantify interaction in a particular year (here 2000)
using the data in other years to estimate the intensity function. 
We restrict the study area to the rectangular region 
$W = [245.4663, 682.2945] \times [301.0545, 838.6173]$, a subset of New 
Brunswick. Doing so, we obtain the data set $\{(z_i,m_i,t_i)\}_{i=1}^{n}$ 
containing $n=3267$ records of which $147$ occur in the year 2000. 
Since the number of fires occuring in 2000 and fuelled by e.g. `dump' 
is small, we use the mark space $\mathcal{M} = \{\text{forest, other}\}$.
For further details, see \cite{Turner}.


\subsection{Testing for independence}
\label{S:Turner}

To test for independence between the various categories (cf.\ 
Proposition~\ref{LemmaIndependence1}) we use $J_{\rm inhom}^{CD}$ 
in combination with the Lotwick--Silverman approach discussed in 
Section~\ref{SectionTesting}.

\begin{figure}[!htbp]
\begin{center}
\begin{tabular}{cc}
\includegraphics[width=0.45\textwidth]{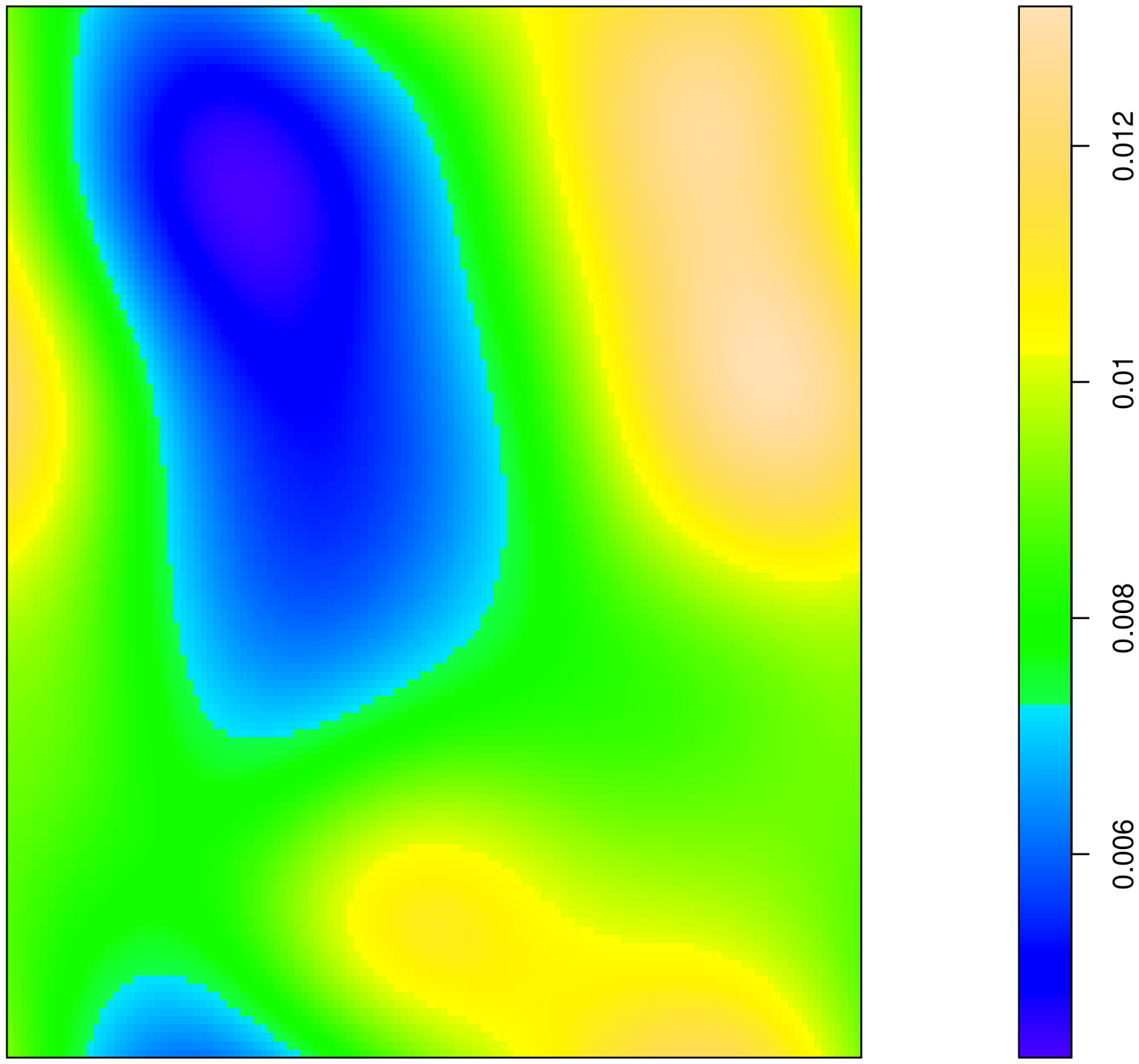}
&
\includegraphics[width=0.45\textwidth]{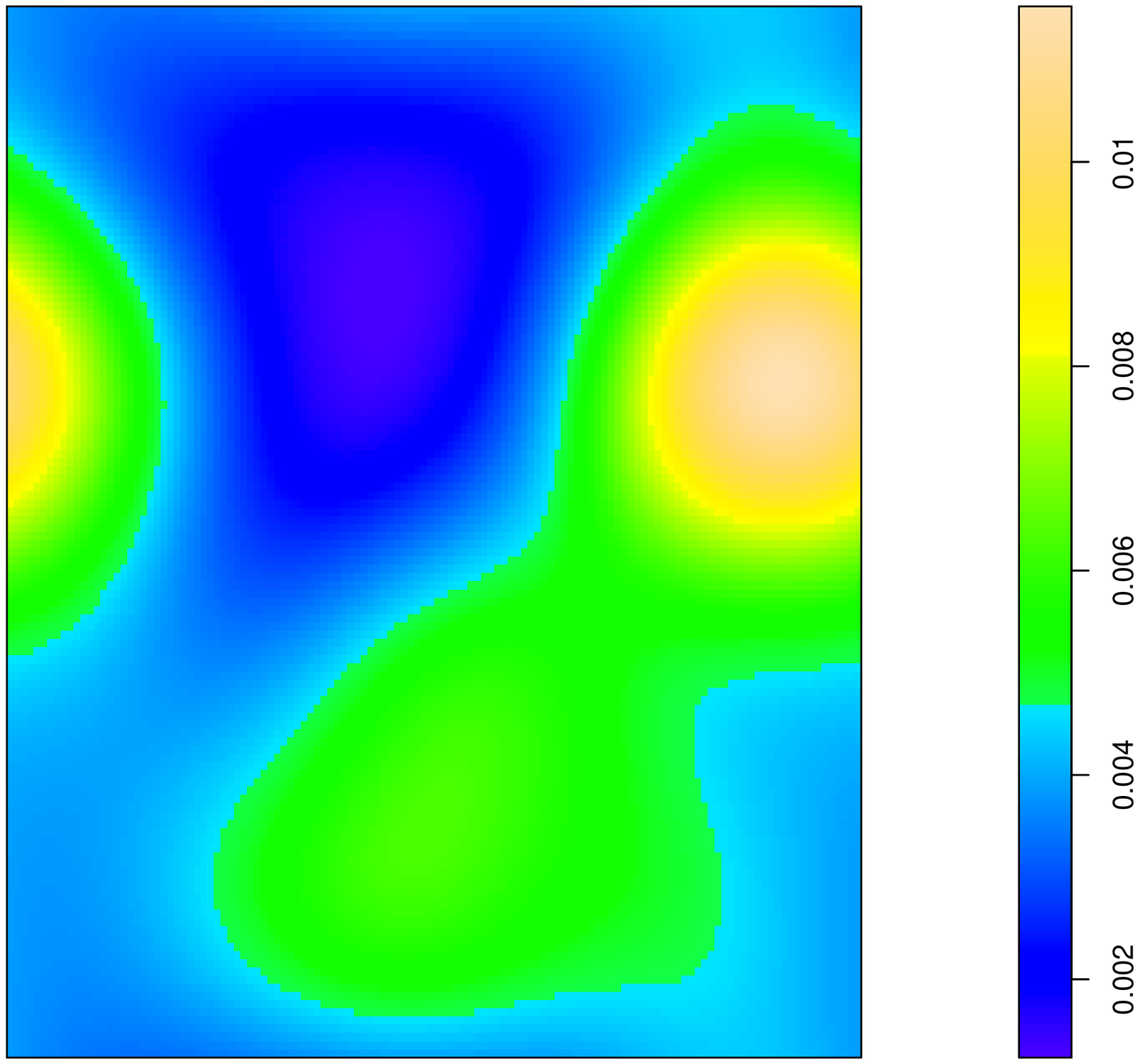}
\end{tabular}
\caption{Gaussian kernel estimator with bandwidth $\sigma=66$ and
torus edge correction calculated over all years except 2000.
Left: forest fuelled fires. Right: other fires.}
\label{F:Intensity}
\end{center}
\end{figure}

Following \cite[p.~205]{Turner}, we assume that for any given year $t\in T$,
the intensity function is of the form 
\begin{equation}
\label{e:scaling}
\lambda^*(z,m,t) = c_t \, \lambda(z,m), \quad 
(z,m) \in W \times \mathcal{M},
\end{equation}
where $c_t>0$ is a year dependent scaling of some overall 
intensity $\lambda(z,m)$. Since the mark set is finite, we take the product
of Lebesgue and counting measure as reference measure so that for fixed mark 
$m$, $\lambda(z, m) = \lambda_m(z)$ is the overall intensity function
of wildfires with fuel type $m$. 

From now on, focus on the year 2000. Since the Lotwick--Silverman approach 
is based on torus translations of one of the component patterns {\em as well
as} its corresponding intensity function, we must use a torus edge correction
for the intensity function. More precisely, we estimate $\lambda_m(z)$
by means of a Gaussian kernel estimator based on all observations with mark 
$m$ that do not fall in the year 2000. Regarding the bandwidth, since we 
consider each mark separately, we use the larger of the bandwidths considered 
by \cite{Turner} for the ground process $Z$, that is, $\sigma = 66$ which 
(approximately) is the square root of the area of New Brunswick multiplied 
by $0.10$.  The results are displayed in Figure~\ref{F:Intensity}.

Recall that the summary statistics discussed in this paper assume that the 
ground process is simple. As the data pattern $\{ ( z_i, m_i, 2000) \}$ for
the year 2000 contains duplicated points, we must discard them, in other words, 
we delete all pairs $(z_i,m_i,t_i)$ and $(z_j,m_j,t_j)$ satisfying $z_i=z_j$ 
and $t_i=t_j=2000$. This results in a marked point pattern with $n_{2000}=124$ 
points (see Figure~\ref{F:Data}). 

\begin{figure}[!hbtp]
\begin{center}
\includegraphics[width=0.45\textwidth]{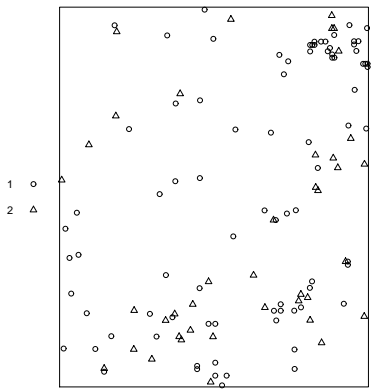}
\caption{Cleaned wildfire data for the year 2000 (\cite{Turner}). 
Type 1 fires are those fuelled by `forest', type 2 fires are fuelled
by other materials.} 
\label{F:Data} 
\end{center}
\end{figure}

To estimate the year dependent constant in (\ref{e:scaling}), we use a 
mass preservation property and equate 
\[
\hat c_{2000} \int_W \sum_{m\in\M} \hat\lambda_m(z) \, dz 
\]
to $n_{2000}$ to obtain $\hat c_{2000} = 124/3120 \approx 0.0397$.
As an aside, an alternative model would be to replace the scaling 
in (\ref{e:scaling}) by a mark dependent one. For $m=\text{`forest'}$, 
this would lead to the value $\hat c_{2000}(m) \approx 0.0414$, which 
does not differ much from $\hat c_{2000}$. 

Set $C=\{\text{forest}\}$ and write $D = \M \setminus C$. To assess
whether the point process $Y_C$ of forest fires occurring in 2000 
is correlated with $Y_D$, we carry out the Lotwick-Silverman scheme 
developed in Section~\ref{SectionTesting} and plot envelopes of
$\widehat{J}_{\rm inhom}^{CD}$ as discussed in Section~\ref{EstimateD}
based on $99$ independent random torus translations. The results are shown in 
Figure~\ref{F:J} and provide graphical evidence for positive correlation 
between the patterns $Y_C$ and $Y_D$.  
This could possibly be interpreted as sparks being transmitted from, 
say, a forest fire to some place further away, where 
the ignition takes place in some other matter, e.g\ a field of grass. 
Furthermore, it may also be an indication that during certain periods 
particular regions are more dry and thus more likely to provide fuel 
for fires. However, since we do not have any specific temporal information 
connected to each point in the data set, besides in which year a fire 
occurs, such conclusions should be treated as speculative. 

\begin{figure}[!htbp]
\begin{center}
\begin{tabular}{cc}
\includegraphics[width=0.45\textwidth]{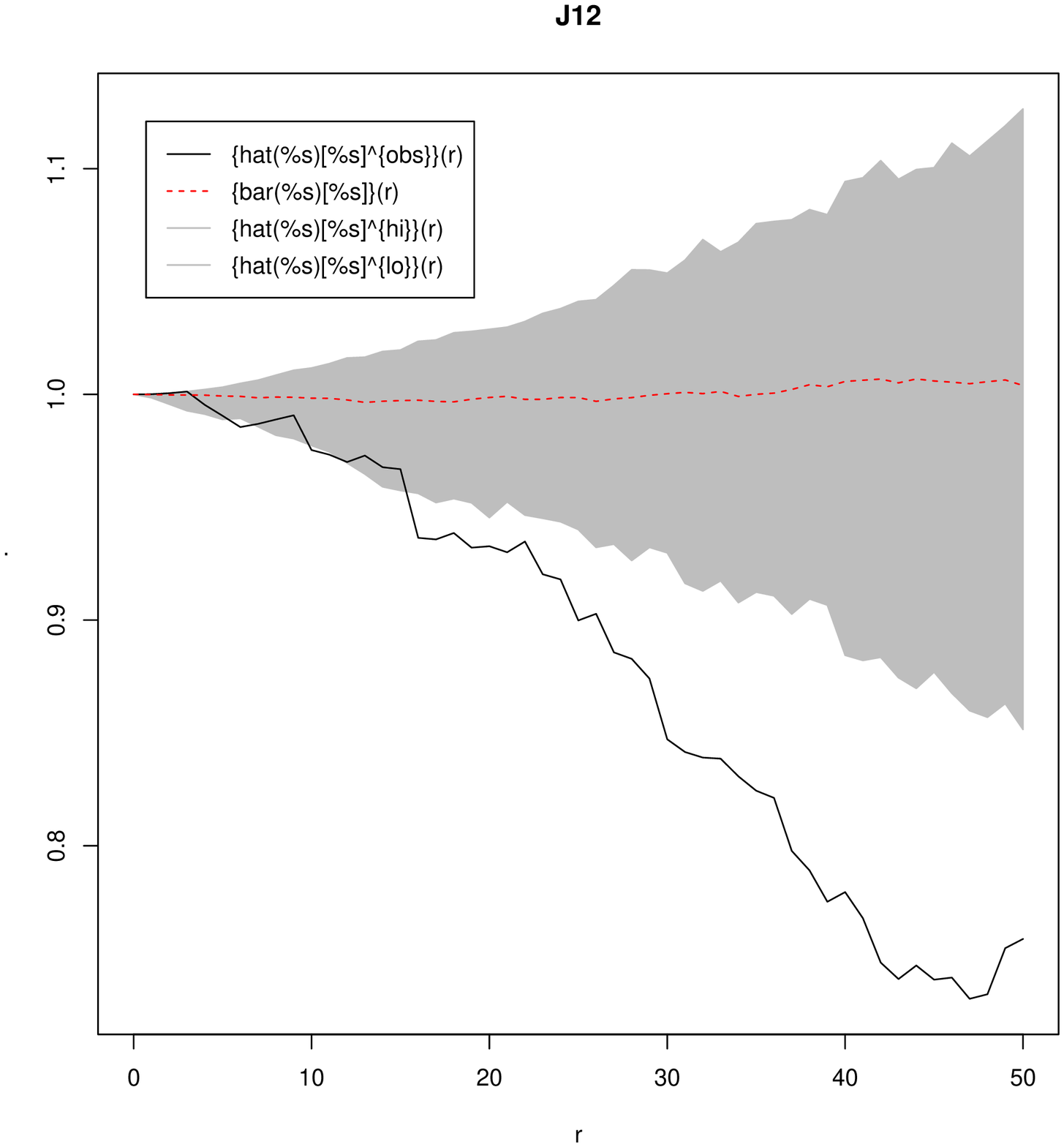}
&
\includegraphics[width=0.45\textwidth]{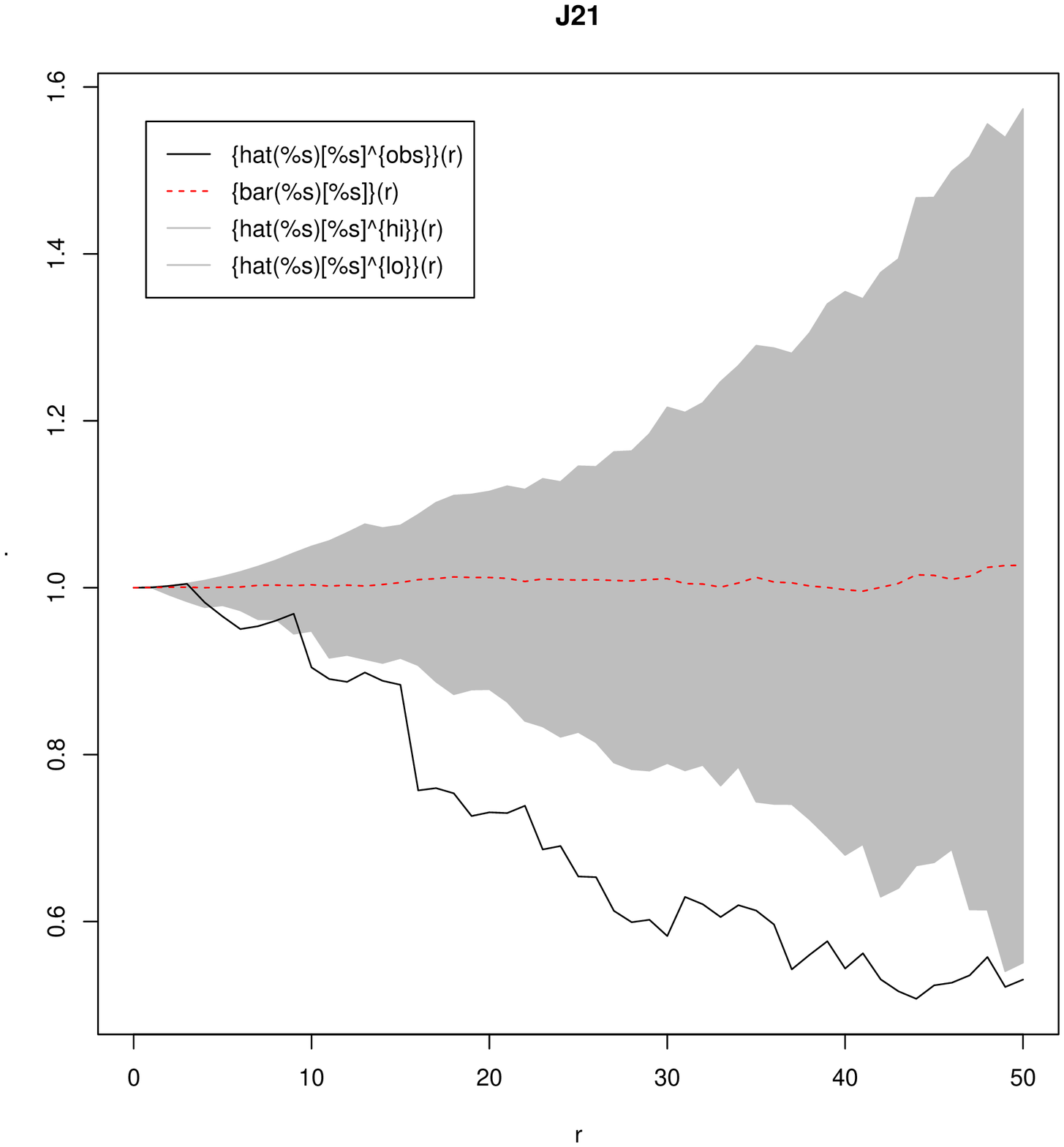}
\end{tabular}
\caption{Estimated inhomogeneous cross J-function for the pattern
displayed in Figure~\ref{F:Data} (black line) with rank $5$-envelopes 
based on $99$ independent translations of the pattern with fuel type 
`forest' and their mean (red line). The plugged in estimated intensity 
functions of the component patterns are proportional to those in 
Figure~\ref{F:Intensity}.
Left: $\widehat{J}_{\rm inhom}^{CD}$-function. 
Right: $\widehat{J}_{\rm inhom}^{DC}$-function.}
\label{F:J}
\end{center}
\end{figure}

\subsection{Random labelling}

Although in our context a random labelling assumption does not seem to be 
appropriate, we perform the analysis for illustrative purposes. 

Recall that under the random labelling assumption, the intensity function
factorises as a product of the ground intensity and the mark distribution.
Therefore, model (\ref{e:scaling}) cannot be used. Instead, for fixed $t$,
let
\[
\lambda^*(z, m, t) = c_t \, \lambda_g(z) 
\]
with respect to the product of the Lebesgue measure and the (empirical) 
mark distribution.  We estimate the ground intensity function $\lambda_g$
by means of a locally edge-corrected Gaussian kernel estimator (see e.g.\ 
\cite{Lieshout12}) with bandwidth $\sigma = 66$ that uses the locations 
of wildfires in all years except for the year 2000. The year dependent 
constant $c_t$ for $t=2000$ is estimated by the mass preservation principle 
as before. 


Appealing to Proposition~\ref{LemmaIndependentlyMarked}, we use $D^{C\M}$ 
with $C= \{ \text{forest} \}$ as test statistic and randomly permute the 
marks of the wildfires in 2000 (see Figure~\ref{F:Data}) $99$ times to 
obtain envelopes. The result displayed in Figure~\ref{F:D} provides no 
evidence for rejecting the random labelling hypothesis.

\begin{figure}[!htbp]
\begin{center}
\includegraphics[width=0.45\textwidth]{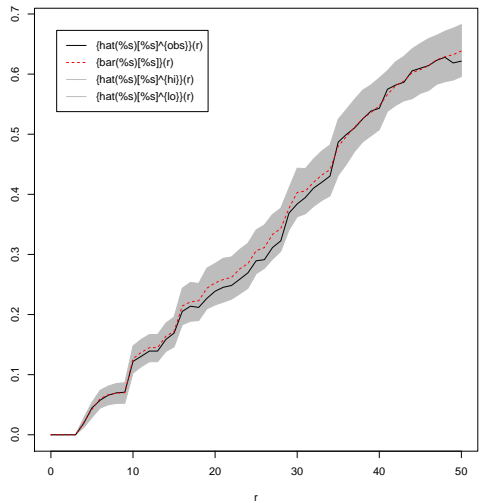}
\caption{Estimated inhomogeneous $D^{C\M}$-function for the pattern
displayed in Figure~\ref{F:Data} (black line) with $C=\{ \text{forest} \}$
with rank $5$-envelopes based on $99$ independent random labellings 
and their mean (red line). The ground intensity function is estimated
by a locally edge-corrected Gaussian kernel estimator 
with bandwidth $\sigma = 66$.}
\label{F:D}
\end{center}
\end{figure}

\section{Summary}
\label{SectionDiscussion}

In this paper we defined cross $D$- and $J$-functions for inhomogeneous 
intensity-reweight\-ed moment stationary marked point processes and 
indicated how they could be used to investigate various independence 
and marking assumptions. In practice, the intensity function tends to 
be unknown and must be estimated. This is not a problem when there are 
independent replicates or pseudo-replication in the form of covariates
\cite{Guan08}. Otherwise, pragmatic model assumptions must be 
made. For example, in Section~\ref{SectionApplications}, we worked under the
assumption that the spatial trend in a given year is proportional to 
the long term trend. When prior information about the data is available, 
a parametric model may also be used.

Finally, it is important to realise that not all point processes on product
spaces are marked point processes, as the ground process need not
be locally finite. An important special class is formed by spatio-temporal
point processes. This class is the focus of a companion paper \cite{CronLies13}
in which we define inhomogeneous $D$- and $J$-functions directly by equipping 
the product space $\R^d \times \R$ with a suitable metric.

\section*{Acknowledgements}

The authors are grateful to R.\ Turner for useful discussions and
access to data. This research was supported by the Netherlands Organisation 
for Scientific Research NWO (613.000.809).

\newpage

\appendix
\appendixpage

\section{Proof of Theorem~\ref{ThmMarkedJfun}}

Consider the function $u_r^{a,D}$ defined in Definition~\ref{DefinitionDfunction}
and let $S_r^D = B(0,r) \times D$.
Then the expansion (\ref{GFMPP}) implies
\beann
G(1-u_r^{a,D}) 
&=& 
1 + \sum_{n=1}^{\infty} \frac{(-1)^n}{n!} 
\int_{\R^d\times\M} \cdots \int_{\R^d\times\M} 
\prod_{i=1}^{n}
\frac{\bar\lambda_D \1\{ (z_i, m_i) \in a + S_r^D \}}{\lambda(z_i, m_i)}
\\
& &\times
\rho^{(n)}((z_1, m_1), \ldots, (z_n, m_n)) 
\prod_{i=1}^{n} dz_i d\nu(m_i) 
\\
&=& 
1 + \sum_{n=1}^{\infty} \frac{(-\bar\lambda_D)^n}{n!} 
\int_{(a+S_r^D)^n} 
\frac
{\rho^{(n)}((z_1, m_1), \ldots, (z_n, m_n))}
{\lambda(z_1, m_1) \cdots \lambda(z_n, m_n)}
\prod_{i=1}^{n} dz_i d\nu(m_i),
\eeann
which is absolutely convergent by assumption and does not depend
on the choice of $a$ by the IRMS-assumption on $Y$. 
Furthermore, for any $a\in\R^d$,
\begin{align*}
&G^{!a}_{C}(1-u_r^{a,D})
=
\frac{1}{\nu(C)} \int_{C} 
\E^{!(a,b)}\left[
\prod_{(z,m)\in Y}
\left(1-\frac{\bar\lambda_D \1\{ (z,m) \in a + S_r^D \}}{\lambda(z, m)}\right)
\right]
d\nu(b)
\\
&=
1+
\frac{1}{\nu(C)} 
\sum_{i=1}^{\infty}\frac{(-\bar\lambda_D)^n}{n!}
\int_{C}
\E^{!(a,b)}\left[
\sum_{(z_1, m_1), \ldots, (z_n, m_n) \in Y}^{\neq}
\prod_{i=1}^{n}
\frac{\1\{(z_i, m_i) \in a + S_r^D \}}{\lambda(z_i, m_i)}
\right]
d\nu(b)
\end{align*}
by the local finiteness of $Y$ and an inclusion-exclusion argument. 

Next, we show that for any bounded $B\in\BB(\R^d)$, 
\beann
&&\int_{B} \left\{ \int_{C}
\E^{!(a,b)}\left[
\sum_{(z_1, m_1), \ldots, (z_n, m_n) \in Y}^{\neq}
\prod_{i=1}^{n}
\frac{\1\{(z_i, m_i)\in a + S_r^D\}}{\lambda(z_i, m_i)}
\right]
d\nu(b) \right \} da
=\\
&=&
\int_{B} \left\{ \int_{C}
\bigg(
\int_{(S_r^D)^n} 
\frac{
\rho^{(n+1)}((0,b), (z_1, m_1), \ldots, (z_n, m_n))
}{\lambda(0,b) \lambda(z_1, m_1) \cdots \lambda(z_n, m_n)}
\prod_{i=1}^{n} dz_i d\nu(m_i)
\bigg)
d\nu(b) \right\} da
\eeann
so that the integrands in between the curly brackets are $\ell$-almost 
everywhere equal and consequently the integrand on the left hand side is 
constant as a function of $a\in\R^d$. In order to do so, define 
\beann
g_{r}^B( (a,b), \varphi) 
= 
\frac{\1\{(a,b) \in B\times C\}}{\lambda(a, b)}
\sum_{(z_1, m_1), \ldots, (z_n, m_n) \in \varphi}^{\neq}
\prod_{i=1}^{n}
\frac{\1\{(z_i, m_i) \in a + S_r^D \}}{\lambda(z_i, m_i)},
\eeann
which is non-negative and measurable.
By rewriting the expression for $g_{r}^B((a,b),Y\setminus\{(a,b)\})$, 
(\ref{CampbellMPP}) and the translation invariance of the $\xi_n$, we obtain
\beann
&&\E\left[\sum_{(a,b)\in Y} g_{r}^B( (a,b), Y\setminus\{(a,b)\})\right]
\\
&=&
\E\left[
\sum_{(a,b), (z_1, m_1), \ldots, (z_n, m_n) \in Y}^{\neq}
\frac{\1\{(a,b)\in B\times C\}}{\lambda(a,b)}
\prod_{i=1}^{n}
\frac{\1\{ (z_i, m_i) \in a + S_r^D \}}{ \lambda(z_i, m_i)}
\right]
\\
&=&
\int_{B} \int_{C}
\bigg(
\int_{S_r^D} \cdots \int_{S_r^D}
\frac{
\rho^{(n+1)}((0,b), (z_1, m_1), \ldots, (z_n, m_n))
}{\lambda(0,b) \lambda(z_1, m_1) \cdots \lambda(z_n, m_n)}
\prod_{i=1}^{n} dz_i d\nu(m_i)
\bigg)
da d\nu(b).
\eeann
At the same time, the reduced Campbell-Mecke formula (\ref{CMthmMPP}) 
implies that
\begin{align*}
& \E\left[\sum_{(a,b) \in Y} g_{r}^B((a,b), Y\setminus\{(a,b)\})\right]
 = 
\\
&
\int_B \int_C
\E^{!(a,b)}\left[
\sum_{(z_1, m_1), \ldots, (z_n, m_n) \in Y}^{\neq}
\prod_{i=1}^{n}
\frac{\1\{(z_i, m_i) \in a + S_r^D \}}{\lambda(z_i, m_i)}
\right]
da d\nu(b)
\end{align*}
and the required equality of the two expressions follows. 
Hence, for $\ell$-almost all $a\in\R^d$, provided the series is 
absolutely convergent,
\begin{align*}
&G^{!a}_{C}(1-u_r^{a,D})
= 1 + \frac{1}{\nu(C)} 
\sum_{n=1}^{\infty} \frac{(-\bar \lambda_D)^n}{n!}
\int_{C} 
\\
&
\bigg(
\int_{S_r^D} \cdots \int_{S_r^D}
\frac{
\rho^{(n+1)}((0,b), (z_1, m_1), \ldots, (z_n, m_n))
}{\lambda(0,b) \lambda(z_1, m_1) \cdots \lambda(z_n, m_n)}
\prod_{i=1}^{n} dz_i d\nu(m_i)
\bigg)
d\nu(b)
\\
&=
1 + \frac{1}{\nu(C)} 
\sum_{n=1}^{\infty} \frac{(-\bar \lambda_D)^n}{n!}
\int_{C} 
\\
&
\bigg(
\int_{S_r^D} \cdots \int_{S_r^D}
\sum_{k=1}^{n+1}
\sum_{E_1, \ldots, E_k}
\prod_{j=1}^{k}
\xi_{|E_j|}(\{(z_i, m_i): i \in E_j\})
\prod_{i=2}^{n+1} dz_i d\nu(m_i)
\bigg)
d\nu(b),
\end{align*}
where $(z_1,m_1) \equiv (0,b)$. 
By splitting this expression into terms based on whether the index sets 
$E_j$ contain the index $1$ (i.e.\ on whether $\xi_{|E_j|}$ includes 
$(z_1,m_1) \equiv (0,b)$), under the convention that $\sum_{k=1}^{0} = 1$, 
we obtain
\begin{align*}
&G^{!a}_{C}(1-u_r^{a,D})
= 1 + 
\frac{1}{\nu(C)} \sum_{n=1}^{\infty} \frac{(-\bar\lambda_D)^n}{n!} 
\sum_{E\in\mathcal{P}_n}
J_{|E|}^{CD}(r)
\sum_{k=1}^{n-|E|}
\sum_{\substack{E_1, \ldots, E_k \neq \emptyset \text{ disjoint}\\ 
\cup_{j=1}^{k} E_j = \{1, \ldots, n \} \setminus E}}
\prod_{j=1}^{k} I_{|E_j|},
\\
&I_{n} = 
\int_{S_r^D} \cdots \int_{S_r^D}
\xi_{n}((z_1, m_1), \ldots, (z_{n}, m_{n}))
\prod_{i=1}^{n} dz_i d\nu(m_i),
\end{align*}
where $J_0^{CD}(r) \equiv \nu(C)$, $|\cdot|$ denotes cardinality and 
$\mathcal{P}_{n}$ the power set of $\{1,\ldots,n\}$. Finally,
by noting that the expansion contains terms of
the form $J_k I_{l_1}^{m_1} \cdots I_{l_n}^{m_n}$ multiplied by a scalar
and basic combinatorial arguments, we conclude that 
\begin{align*}
&G^{!a}_{C}(1-u_r^{a,D})
= \frac{1}{\nu(C)}\left(\nu(C) + 
\sum_{n=1}^{\infty} \frac{(-\bar\lambda_D)^n}{n!} J_n^{CD}(r)\right)
\\
&\times
\Bigg(
1 + \sum_{l=1}^{\infty} \frac{(-\bar\lambda_D)^l}{l!} 
\sum_{k=1}^{l}
\sum_{\substack{E_1, \ldots, E_k \neq \emptyset\text{ disjoint}\\ 
\cup_{j=1}^{k} E_j = \{ 1, \ldots, l \}}}
\prod_{j=1}^{k} I_{|E_j|}
\Bigg)
=
J_{\rm inhom}^{CD}(r) \, G(1-u_r^{0,D}).
\end{align*}
The right hand side is absolutely convergent as a product of 
absolutely convergent terms, therefore so is the series expansion
for $G^{!a}_{C}(1-u_r^{a,D})$. 

\end{document}